\newtheorem{defi}{Definition}[section]
\newtheorem{theo}{Theorem}[section]
\newtheorem{lemma}{Lemma}[section]
\newtheorem{example}{Example}[section]
\newtheorem{cor}{Corollary}[section]
\title{Extension of the Sheaf-theoretic Structure to \\ Algebraic Quantum Field Theory}
\author{Tsubasa Takagi
\footnote{
e-mail: takagi.tsubasa283@gmail.com
}
 \\ Koukou at Otsuka, University of Tsukuba,\\
1-9-1 Otsuka, Tokyo, 112-0012, Japan}
\date{}
\begin{document}
\maketitle
\begin{abstract}
The sheaf-theoretic structure is useful in classifying no-go theorems related to non-locality and contextuality. It provides a new point of view different from conventional formularization of quantum mechanics. First, we examine a relationship between the conventional formularization and the innovative formularization. There exists an equivalence of their categories, and from the equivalence, one locality can be transformed to another as a concrete example. Next, we extend the quantum mechanics which has a finite-degree of freedom to  the quantum filed theory with an infinite-degree of freedom, especially to the algebraic quantum field theory (AQFT for short). We consider about a violation of the Bell inequality in AQFT, and we show that the condition of strict spacelike separation has the same Cartesian product structure as locality of quantum mechanics. Also, we show that no-signalling property can be proved by Split Property. A local state is a sheaf which is defined by Split Property in AQFT. It induces the sheaf-theoretic structure. Finally, we show an extension of the No-Signalling theorem which depends on spacetime in AQFT.
\end{abstract}

\section{Introduction}
According to quantum mechanics, a measured value of a physical system does not correspond to another measured value of the same physical system for other observation. We can only predict a probability distribution (more precisely, called a wave function). Now, we consider about a problem "Can any probabilistic phenomena of quantum mechanics be described by determinism using unknown variables?". These unknown variables are said hidden-variables.

In \cite{bell1964einstein}, Bell shows that if there exist local hidden-variables, they are necessary to satisfy a certain inequality which is called the Bell inequality. Note that "locality" means that an object is influenced directly only by its nearby surroundings. 

After all, the Bell inequality is experimentally tested and it is explored that measured values do not satisfy the inequality. (See \cite{aspect1982experimental}) As a result, quantum mechanics is not described by local hidden-variable theory. Also, other evidences for denying hidden-variable theory (no-go theorem) are given under various assumptions. For instance, Kochen-Specker \cite{kochen1975problem}, Hardy \cite{hardy1993nonlocality} and Greenberger, Horne, Zeilinger (GHZ) \cite{greenberger1989going}, and so on.

After that, Abramsky and Brandenburger \cite{abramsky2011sheaf} classify their no-go theorems. In their study, the sheaf-theoretic method plays an important part. It will be reviewed in \S $2$.

We consider about two types of locality in \S $3$. One is in sheaf-theoretic structure and the other is in algebraic probability theory. Algebraic probability \cite{hora2007quantum} has a contribution to describe quantum phenomenon. It contains classical probability, in other word, measure-theoretic probability by Kolmogorov). In \cite{abramsky2011sheaf}, locality arises from one of more general notion of factorizability (Definition \ref{4}). By contrast, in \cite{khrennikov2001foundations}, locality assumption is defined by independent probabilities in two-compound system. We extend it to $n$-compound system in Definition \ref{2}. We also show locality assumption induces the Bell inequality violation. It justifies our locality assumption as a mathematical formulation.

We give a relationship between the sheaf-theoretic structure and the algebraic probability as an equivalence of categories in Theorem \ref{3}. In \S $4$ and below sections, we use category theory \cite{awodey2010category}, because it is useful to understand their mathematical structures. We show two specific examples. First, a (quantum) state in algebraic probability space transforms to a no-signalling empirical model in sheaf-theoretic structure. Second, the factoriability transforms to the independence of probabilities. This theorem is also true in the case with representations (Corollary \ref{11}). The operators of quantum mechanics (e.g. rotation) is described by representations. So, this corollary is physically meaningful.

Next, we discuss how to extend previous discussion to algebraic quantum field theory (AQFT for short) in \S $5$ and \S $6$. We show the notion of locality which divides the sapcetime in Theorem \ref{5}. It induces the Schlieder Property \cite{halvorson2006algebraic}. When the property is satisfied, the Bell inequality in AQFT is violateed \cite{landau1987violation}. We also show the Split Property \cite{halvorson2006algebraic,ojima2015local} induces no-signalling property in Theorem \ref{8}.

Sheaf theory is known as a good tool for analyzing AQFT \cite{haag1996problem,ojima2015local}. It is called a local state. This sheaf divides the spacetime into many parts. The codomain of it is equal to the whole observables. Using equivalence of categories in Theorem \ref{3}, observables correspond to measurements of the sheaf-theoretic structure. Then, we can use the original sheaf which is defined in \cite{abramsky2011sheaf}. Therefore, the spacetime is connected with the sheaf-theoretic structure. The composite of their two sheaves is defined in Definition \ref{12}.

Finally, under some assumptions, we show the elements of the composite are no-signalling empirical models (Theorem \ref{13}). It can be regarded as No-Signalling theorem in AQFT. So, we call it a spacetime No-Signalling theorem. One of the assumptions of this theorem suggests that we can discuss a compound of systems. In \cite{mansfield2013mathematical}, the compound of systems is described by category theory, notably by a symmetric monoidal category. Hence, this theorem promotes a development of the sheaf-theoretic structure theory and the connection with AQFT.

\section{Review of the sheaf-theoretic structure}
In \cite{abramsky2011sheaf}, measurements and contexts are mathematically characterized. We review how to describe their notions. We discuss this formulation as a probability theory which is independent of physical meaning on quantum mechanics in this section.
\begin{itemize}
\item {\bf Measurement scenario} $(X,O,\mathcal{M})$ is the tuple of two sets $X,O$ and one of the antichain covers $\mathcal{M}$ of $X$: if $C,C'\in\mathcal{M},C\subset C'$ then $C=C'$ and $\bigcup_{C\in\mathcal{M}}C=X$. Note that, $X$ represents a measurement, and $O$ represents an outcome of $X$.

\item {\bf Maximal context} $C$ is an element of $\mathcal{M}$. The subset $U$ of $X$ represents a context.

\item {\bf Event sheaf} $\mathcal{E}:\mathbf{P}^{\mathrm{op}}(X)\rightarrow\mathbf{Set}$ is a presheaf where $\mathbf{P}(X)$ is the poset category of measurements $X$ and $\mathbf{Set}$ is the category of sets. Notice that, $\mathcal{E}(U)$ is the set of functions $f:U\rightarrow O$. In this paper, we don't clearly distinguish the word "presheaf" from "sheaf".

\item {\bf Distribution} is a function $d$ from $U$ to commutative semiring $R$ with finite support. It is supposed to satisfy
\[
\sum\limits_{U\subset X}d(U)=1.
\]

\item {\bf Distribution functor} is a functor which transfers the category of the sets of distributions $\mathcal{D}_R(C)$ to $\mathbf{Set}$. It is supposed that, given a morphism $f:C_1\rightarrow C_2$,
\[
\mathcal{D}_R(f):\mathcal{D}_R(C_1)\rightarrow\mathcal{D}_R(C_2)::d\mapsto[C_2\mapsto\sum\limits_{f(C_1)=C_2}d(C_1)].
\]

\item {\bf Marginal} is the function
\[
d|_{U(s)}:=\sum\limits_{\substack{s'\in\mathcal{E}(U')\\s'|_U=s}}d(s')
\]
where $s\in\mathcal{E}(U)$. It is supposed that for all $U'\supset U$,
\[
\mathcal{D}_R\mathcal{E}(U')\rightarrow\mathcal{D}_R\mathcal{E}(U)::d\mapsto d|_U
\]
where $\mathcal{D}_R\mathcal{E}:\mathbf{P}^{\mathrm{op}}(X)\rightarrow\mathbf{Set}$ is the composite of $\mathcal{E}$ and $\mathcal{D}_R$.

\item {\bf Empirical model} $e_C\in\mathcal{D}_R\mathcal{E}(C)$ is the section of event sheaf. The sequence of empirical models determines the measurement scenario.
\end{itemize}

\section{Two types of locality}
\subsection{Locality of the sheaf-theoretical structure}
Abramsky and Brandenburger formulate a theory of hidden-variables in \cite{abramsky2011sheaf}. A discussion using hidden-variables, in other word, hidden-variable theory means that when we measure physical quantities, there exists a hidden-variable which makes relationship between two measurements. However, we think their measured values behave probabilistic.

In order to define naturaly the notion of locality, we suppose the existence of a hidden-variable. Note that, this existence is denied in the discussion about the Bell inequality.

Fix the set $\Lambda$ for the values of the hidden-variables. Given $\lambda\in\Lambda$, we consider empirical model $h_C^\lambda\in\mathcal{D}_R\mathcal{E}(C)$. Suppose $h_\Lambda\in\mathcal{D}_R(\Lambda)$ is the distribution of the hidden-variables. In order to describe an empirical model by hidden variables, for every $s\in\mathcal{E}(C)$
\[
e_C(s)=\sum\limits_{\lambda\in\Lambda}h_C^\lambda(s)\cdot h_\Lambda(\lambda)
\]
is required. It can be realized that the hidden-variables exist when this condition is fulfilled. In the case that the existence of hidden-variables is assumed, the empirical model $e_C$ is statistically described as averaging. For example, in the thermodynamics, we regard particles being deterministic. Hence, it looks like a probability, but actually it is determinism with measurement errors.

The compatibility of hidden-variable theory is defined as follows:
\begin{defi}{\rm(\cite[Sect. 8]{abramsky2011sheaf})}
If the distribution of hidden-variable $h_C^\lambda$ is compatible, in other word for any two maximal contexts $C$ and $C'$
\[
h_C^\lambda|_{C\cap C'}=h_{C'}^\lambda|_{C\cap C'}
\]
is satisfied, it is said to be {\bf parameter independence}.
\end{defi}

The condition like parameter independence of the general empirical model is called no-signalling. It forbids superluminal signalling.
\begin{defi}{\rm(\cite[Sect. 2.5]{abramsky2011sheaf})}
If the empirical model $e_C$ is compatible, in other word for any two maximal contexts $C$ and $C'$ such that
\[
e_C|_{C\cap C'}=e_{C'}|_{C\cap C'},
\]
it is said to be {\bf no-signalling}.
\end{defi}

No-signaling is a more general assumption than quantum realization (all we need for the assumptions when we discuss quantum mechanics). For example, PR box \cite{popescu1994quantum} satisfies no-signaling while it does not satisfy quantum realization. We will assume no-signalling as a minimum requirement of the empirical models which we will be interested in.

One of the special conditions is important for us.
\begin{defi}{\rm(\cite[Sect 8.]{abramsky2011sheaf})}\label{4}
If for every $s\in\mathcal{E}(C)$
\[
h_C^\lambda(s)=\prod\limits_{m\in C}h_C^\lambda|_{\{m\}}(s|_{\{m\}})
\]
then the hidden-variable model $h$ is called {\bf factorizability}.
\end{defi}

Hence, the hidden-variable model is described by Cartesian product which is restricted to the elements of maximal contexts. In Bell-type measurement scenario, factorizability corresponds exactly to the Bell locality \cite{bell1964einstein}. Therefore, factorizability is realized as the notion of generalized locality. (see \cite{abramsky2011sheaf})

\subsection{Locality of the algebraic probability space}

We define a few terms of algebraic probability space \cite{hora2007quantum} needed later. An {\bf algebraic probability space} is a tuple $(\mathcal{A},\varphi)$ where $\mathcal{A}$ is a $C^*$-algebra and {\bf state} $\varphi$ is a positive linear functional which satisfies $\varphi(1)=1$. Notice that $1$ is an identity element of $\mathcal{A}$.

The algebraic probability space $(L^\infty(\Omega,\mathcal{F},\mathbf{P}),\varphi)$ arises from a classical probability space $(\Omega,\mathcal{F},\mathbf{P})$ and an expectation satisfies $\varphi(f)=\mathbf{E}(f)$ for every $f\in L^\infty(\Omega,\mathcal{F},\mathbf{P})$ where $\mathbf{E}$ represents the expectation, and $L^\infty$ represents a Lebesgue space: for all $1\leq p<\infty$, $p$ norms of any function are measurable.

Every particle which has $1/2$ spin in quantum mechanics can be up or down as a spin state. The physical system whose particles can take two states is said to be a {\bf two-level system}. Notice that, it is not complete formulation of two-level system in physical term. In general, given an $n\times n$ positive Hermitian matrix $A$, the state $\rho$ (which is regarded as a density matrix) of the $n$-level system satisfies $\varphi(A)=\mathrm{Tr}(\rho A)$. In the case of $n\geq 2$, it is non-commutative. 

Fix orthonormal basis $e_1=(0,1),e_2=(1,0)$ of two-dimensional Hilbert space $\mathbf{C}^2$, where $\mathbf{C}$ represents complex numbers. Observables are given by $2\times 2$ Hermite matrix, and the whole of them $M(2,\mathbf{C}^2)$ is an algebraic probability space. Using the Pauli matrices
\[
\sigma_0=\left(
    \begin{array}{cc}
      1 & 0 \\
      0 & 1
    \end{array}
  \right)
,\sigma_1=\left(
    \begin{array}{cc}
      0 & 1 \\
      1 & 0
    \end{array}
  \right)
,\sigma_2=\left(
    \begin{array}{cc}
      0 & -i \\
      i & 0
    \end{array}
  \right)
,
\sigma_3=\left(
    \begin{array}{cc}
      1 & 0 \\
      0 & -1
    \end{array}
  \right),
\]
any Hermite matrices are described as linear combination: for all $x_0,\ldots,x_3\in\mathbf{R}$
\[
x_0\sigma_0+x_1\sigma_1+x_2\sigma_2+x_3\sigma_3= \left(
    \begin{array}{cc}
      x_0+x_3 & x_1-ix_2 \\
      x_1+ix_2 & x_0-x_3 
    \end{array}
  \right)
\]
is satisfied, where $\mathbf{R}$ represents real numbers.

Let $a=(a_1,a_2,a_3)$ be a unit vector on $\mathbf{R}^3$ and
\[
S_a:=a_1\sigma_1+a_2\sigma_2+a_3\sigma_3.
\]
This is a Hermite matrix with eigenvalue $\pm1$. We define the unit vector $u:=(e_1\otimes e_2-e_2\otimes e_1)/\sqrt{2}$ on $\mathbf{C}^2\otimes\mathbf{C}^2$ and
\[
\varphi(A):=\Braket{u,Au},(A\in M(2,\mathbf{C})\otimes M(2,\mathbf{C}))
\]
where $\Braket{\cdot,\cdot}$ is an inner product over $\mathbf{R}$. It induces algebraic probability space $(M(2,\mathbf{C})\otimes M(2,\mathbf{C}),\varphi)$. Furthermore, we consider the correlation of two particles on two-compound system. It arises from the tensor product. We fix two variables: let $a,b\in\mathbf{R}^3$ be unit vectors.

If $S_x^{(1)}:=S_x\otimes I,S_x^{(2)}:=I\otimes S_x,(x\in\{a,b\})$, where $I$ is a unit matrix, then we obtain $\varphi(S_a^{(1)}S_b^{(2)})=-\Braket{a,b}$. Is this a constraint satisfied in measure-theoretic probability space? The answer is no.
\begin{theo}{\rm(\cite[Lem. (1)]{khrennikov2001foundations})}\label{1}
Let $a,b,c\in\mathbf{R}^3$ be unit vectors. There exist three measure-theoretic random variables $X_a,X_b,X_c$ in the set $\{-1,1\}$ with
\[
\mathbf{E}(X_xX_y)\neq-\Braket{x,y}
\]
where $x,y\in\{a,b,c\}$.
\end{theo}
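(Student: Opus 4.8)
The plan is to extract the elementary ``three-correlation'' constraint that every family of $\{-1,1\}$-valued random variables on a common probability space must obey, and then to exhibit unit vectors $a,b,c$ for which the family $-\Braket{x,y}$ violates it. This shows that no measure-theoretic realization can satisfy $\mathbf{E}(X_xX_y)=-\Braket{x,y}$ for all pairs, which is the content of the statement: for any candidate $X_a,X_b,X_c$ at least one pair must fail the equality, exactly the phenomenon announced by ``the answer is no'' just before Theorem \ref{1}.

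First I would note that if $X_a,X_b,X_c$ take values in $\{-1,1\}$ then $X_a^2=X_b^2=X_c^2=1$ pointwise, hence $(X_aX_b)(X_bX_c)(X_cX_a)=1$ pointwise. Thus the three products $X_aX_b,\,X_bX_c,\,X_cX_a$ cannot all equal $-1$, so $X_aX_b+X_bX_c+X_cX_a\ge-1$ pointwise, and taking expectations yields the classical bound
\[
\mathbf{E}(X_aX_b)+\mathbf{E}(X_bX_c)+\mathbf{E}(X_cX_a)\ge-1 .
\]
Now suppose, toward a contradiction, that $\mathbf{E}(X_xX_y)=-\Braket{x,y}$ for every $x,y\in\{a,b,c\}$. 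Substituting this into the bound gives $\Braket{a,b}+\Braket{b,c}+\Braket{c,a}\le1$. But one can pick unit vectors violating this: for instance $a=(1,0,0)$, $b=(\cos\theta,\sin\theta,0)$, $c=(\cos\theta,-\sin\theta,0)$ with $\theta>0$ small, for which $\Braket{a,b}+\Braket{b,c}+\Braket{c,a}=2\cos\theta+\cos 2\theta\to3$ as $\theta\to0$; even the degenerate case $a=b=c$ already works, since there $\mathbf{E}(X_aX_a)=1\ne-1=-\Braket{a,a}$. Hence for such $a,b,c$ the equality cannot hold for all pairs simultaneously, establishing the claim, and the gap between $-\Braket{a,b}$ and the achievable classical values is precisely the source of the Bell violation for the quantum value $\varphi(S_a^{(1)}S_b^{(2)})=-\Braket{a,b}$ computed above.

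I expect the only delicate point to be fixing the reading of the statement --- namely that it asserts the nonexistence of a joint measure-theoretic realization of $\mathbf{E}(X_xX_y)=-\Braket{x,y}$ for a suitable triple $a,b,c$ --- together with recording one explicit such triple; the underlying mathematics is the one-line three-correlation inequality, so there is no hard analytic obstacle.
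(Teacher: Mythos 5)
Your argument is correct, and it reaches the same conclusion as the paper but via a different key inequality. The paper argues by reductio through the bipartite Bell inequality it states separately (Theorem on $|\mathbf{E}(X_a^{(1)}X_b^{(2)})-\mathbf{E}(X_c^{(1)}X_b^{(2)})|\leq 1-\mathbf{E}(X_a^{(1)}X_c^{(1)})$): substituting $\mathbf{E}(X_xX_y)=-\Braket{x,y}$ yields $|\braket{a,b}-\braket{b,c}|\leq 1-\braket{a,c}$, and the violation of this for suitable vectors is simply cited to Khrennikov (the paper's phrasing ``there exists no unit vectors which satisfy the Bell inequality'' is an overstatement --- orthogonal triples satisfy it --- but only the existence of violating vectors is needed). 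You instead derive your own single-system three-correlation bound $\mathbf{E}(X_aX_b)+\mathbf{E}(X_bX_c)+\mathbf{E}(X_cX_a)\geq-1$ from the pointwise parity identity $(X_aX_b)(X_bX_c)(X_cX_a)=1$, and you exhibit explicit witnesses ($2\cos\theta+\cos 2\theta\to 3$). What your route buys is self-containedness --- no appeal to the separately quoted Bell inequality or to the external reference for the violating vectors --- plus an honest handling of the diagonal case $x=y$, where $\mathbf{E}(X_aX_a)=1\neq-1$ settles the literal statement trivially. What it costs is nothing substantive; your closing caveat about the quantifier structure of the statement is well taken, since the theorem as written is garbled and both your proof and the paper's really establish the nonexistence of a joint classical realization of the singlet correlations for suitable (indeed generic non-orthogonal) triples $a,b,c$.
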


Notice that the spin up and down states are represented by $-1$ and $1$ in the sentence.

This theorem means limitation of measure-theoretic expectation. That is, there exist states which are not described by measure-theoretic probability. Therefore, we have to extend an expectation $\mathbf{E}$ to a state $\varphi$ like above discussion. Also, it means violation of the Bell iniquality. The Bell inequality is the following:
\begin{theo}{\bf(Bell inequality} {\rm\cite[Cor. (1)]{khrennikov2001foundations})}
Let $X_x^{j},(x\in\{a,b,c\},j=1,2)$ be measure-theoretic random variables in the set $\{-1,1\}$. Then
\[
|\mathbf{E}(X_a^{(1)}X_b^{(2)})-\mathbf{E}(X_c^{(1)}X_b^{(2)})|\leq 1-\mathbf{E}(X_a^{(1)}X_c^{(1)})=1+\mathbf{E}(X_a^{(1)}X_c^{(2)})
\]
is satisfied.
\end{theo}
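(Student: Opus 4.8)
The plan is to reduce the statement to an elementary pointwise identity among the $\pm1$-valued random variables and then integrate. Since each $X_x^{(j)}$ takes values in $\{-1,1\}$, we have $(X_x^{(j)})^2=1$ almost surely; applying this with $x=a$, $j=1$ gives the pointwise identity
\[
X_a^{(1)}X_b^{(2)}-X_c^{(1)}X_b^{(2)}=X_a^{(1)}X_b^{(2)}\bigl(1-X_a^{(1)}X_c^{(1)}\bigr),
\]
because $X_a^{(1)}X_b^{(2)}\cdot X_a^{(1)}X_c^{(1)}=(X_a^{(1)})^2X_b^{(2)}X_c^{(1)}=X_c^{(1)}X_b^{(2)}$. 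The delicate point here is the choice of which product to factor out: factoring $X_a^{(1)}X_b^{(2)}$ is precisely what forces the correction term to be $1-X_a^{(1)}X_c^{(1)}$, matching the right-hand side of the desired bound.

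Next I would take absolute values pointwise. Since $\bigl|X_a^{(1)}X_b^{(2)}\bigr|=1$ and $X_a^{(1)}X_c^{(1)}$ is again $\pm1$-valued, so that $1-X_a^{(1)}X_c^{(1)}\ge 0$, we obtain
\[
\bigl|X_a^{(1)}X_b^{(2)}-X_c^{(1)}X_b^{(2)}\bigr|=1-X_a^{(1)}X_c^{(1)}\quad\text{a.s.}
\]
All variables in sight are bounded, hence integrable, so we may take expectations; using $|\mathbf{E}(Y)|\le\mathbf{E}|Y|$ together with linearity and monotonicity of $\mathbf{E}$ yields
\[
\bigl|\mathbf{E}(X_a^{(1)}X_b^{(2)})-\mathbf{E}(X_c^{(1)}X_b^{(2)})\bigr|\le\mathbf{E}\bigl|X_a^{(1)}X_b^{(2)}-X_c^{(1)}X_b^{(2)}\bigr|=1-\mathbf{E}(X_a^{(1)}X_c^{(1)}),
\]
which is the inequality asserted.

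For the trailing equality $1-\mathbf{E}(X_a^{(1)}X_c^{(1)})=1+\mathbf{E}(X_a^{(1)}X_c^{(2)})$ I would invoke the perfect-anticorrelation constraint implicit in the two-party ($j=1,2$) EPR--Bohm setting, namely $X_c^{(1)}=-X_c^{(2)}$ almost surely (measuring both wings along the common direction $c$ always yields opposite outcomes in the singlet state). Substituting this into $\mathbf{E}(X_a^{(1)}X_c^{(1)})$ gives $\mathbf{E}(X_a^{(1)}X_c^{(1)})=-\mathbf{E}(X_a^{(1)}X_c^{(2)})$, hence the equality. I expect the one thing that really needs care to be exactly this last step: it is not a purely formal manipulation but rests on the hypothesis $X_c^{(1)}+X_c^{(2)}=0$, which should be stated explicitly; no measure-theoretic subtlety arises elsewhere, since every variable is bounded and only linearity and monotonicity of the expectation are used.
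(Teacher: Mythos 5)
Your proof is correct and is essentially the standard Bell (1964) derivation used by the paper's cited source (Khrennikov, Cor.~(1)); the paper itself states this theorem without proof, so there is no in-paper argument to diverge from, and your factorization $X_a^{(1)}X_b^{(2)}-X_c^{(1)}X_b^{(2)}=X_a^{(1)}X_b^{(2)}\bigl(1-X_a^{(1)}X_c^{(1)}\bigr)$ followed by $|\mathbf{E}(Y)|\le\mathbf{E}|Y|$ is exactly the intended route. You are also right to flag that the trailing equality $1-\mathbf{E}(X_a^{(1)}X_c^{(1)})=1+\mathbf{E}(X_a^{(1)}X_c^{(2)})$ does not follow from the stated hypotheses alone but requires the perfect-anticorrelation assumption $X_c^{(1)}=-X_c^{(2)}$ almost surely, which is implicit in the EPR--Bohm singlet setting the paper is working in (compare its identity $\varphi(S_a^{(1)}S_b^{(2)})=-\Braket{a,b}$, which gives $-1$ at $a=b$) and really should have been stated as a hypothesis of the theorem.
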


If Theorem \ref{1} is not satisfied, then the Bell inequality is described as follows:
\[
|\braket{a,b}-\braket{b,c}|\leq 1-\braket{a,c}
\]

There exists no unit vectors which satisfy the Bell inequality (See \cite{khrennikov2001foundations}). It makes a contradiction. So Theorem \ref{1} is true.

When we would like to suppose that Theorem \ref{1} is always true, we need satisfying the assumption of independent measure-theoretic random variables. It induces
\[
\mathbf{E}(X_xX_y)=\mathbf{E}(X_x)\mathbf{E}(X_y)=0
\]
\[
-\Braket{x,y}=0.
\]

As a result, this assumption induces the violation of the Bell inequality. It is satisfied while the Bell inequality is violated.

In Bell's discussion, it is sufficient to think only two random variables $X^{(1)}_x$ and $X^{(2)}_x$ when we fix the unit vector. In general, we extend the definition of locality \cite{khrennikov2001foundations} to $n$-measurements as follows:
\begin{defi}{\bf(Locality assumption)}\label{2}
When we assume that hidden-variables exist, let $x$ be a unit vector in three-dimensional Euclidean space. We consider two measure-theoretic random variables $X_x^{(1)},\ldots,X_x^{(n)}$ whose values are taken in the set $\{-1,1\}$. Moreover, let $\mathfrak{M}_1,\ldots,\mathfrak{M}_n$ be the set of all possible measurement settings. Under this assumption, the sample space $\Omega$ of measure-theoretic probability space satisfies $\Omega=\Lambda\times\mathfrak{M}_1\times\cdots\times\mathfrak{M}_n$, where $\Lambda$ is the set of hidden-variables.

We consider the functions $F_x^{(1)},\ldots,F_x^{(n)}:\Lambda\times\mathfrak{M}_1\times\cdots\times\mathfrak{M}_n\rightarrow\{-1,1\}$ which satisfy, for all $\lambda\in\Lambda,m_1\in\mathfrak{M}_1,\ldots,m_n\in\mathfrak{M}_n$,
\[
X_x^{(i)}=F_x^{(i)}(\lambda,m_i)
\]
where $x\in\{a,b,c\},i=1,\ldots,n$.
\end{defi}

Measured values $X_x^{(1)},\ldots,X_x^{(n)}$ are random variables. It looks like a probabilistic phenomenon, whereas it depends on hidden-variable $\lambda$. According to this definition, random variables $X_x^{(1)},\ldots,X_x^{(n)}$ are mutually independent when we do not suppose the existence of hidden-variable $\lambda$.

\section{Duality of two categories and locality}
Two types of locality are given in the previous section. In this section we discuss the relationship between them. Firstly, we consider the framework which includes them. We extend their notions as category theory and prove the equivalence of categories.

Let $\mathfrak{N}$ be von Neumann algebra associated with a Hilbert space $\mathcal{H}$. A no-signalling empirical model on $(X_A,O,\mathcal{M}_A)$ is written $S_A$. In the below discussion, we fix $X_A=\mathfrak{N}_A$ and $O=\mathbf{R}$. The {\bf system} $A$ \cite[Sect. 1.7]{mansfield2013mathematical} is defined by tuple $(\mathfrak{N}_A,\mathcal{M}_A,S_A)$. A morphism of the system is a map $k:(\mathfrak{N}_A,\mathcal{M}_A)\rightarrow(\mathfrak{N}_B,\mathcal{M}_B)$ such that for all $C\in\mathcal{M}_A$, $k(C)\in\downarrow\mathcal{M}_B$. Note that $\downarrow\mathcal{M}_B$ is defined by
\[
\downarrow\mathcal{M}_B:=\{U\in P(\mathfrak{N}_B): U\subset C'\in\mathcal{M}_B\}
\]
where $P(\mathfrak{N}_B)$ is the poset of $\mathfrak{N}_B$. Namely, the morphism $k$ extends to contexts which contain all maximal contexts in $\mathcal{M}_B$. Let $k^\star:S_B\rightarrow S_A$ be a reversal map of states which satisfies
\[
k^\star(e_C)(s):=\sum\limits_{\substack{s'\in\mathcal{E}(k(C))\\ s'\circ k=s}}e_{f(C)}(s')
\]
where $e\in S_B$ and $C\in\mathcal{M}_A$. In addition, the morphism $k$ satisfies $k^\star(S_B)\subset S_A$. We can identify associativity and identity of this definition, thus states (or von Neumann algebra with it) and such morphisms $k$ form a category. It is called {\bf system category} (category of systems) \cite[Sect. 1.7]{mansfield2013mathematical}, which is written $\mathbf{S}$. 
\begin{theo}\label{3}
The following categories $\mathbf{S}$ and $\mathbf{N}$ are equivalent as categories:

{\rm(1)}The system category $\mathbf{S}^\mathrm{op}$ whose objects are von Neumann algebras $\mathfrak{N}$.

{\rm(2)}The objects of the category $\mathbf{N}$ are von Neumann algebras $\mathfrak{N}$ and the morphisms are injective $*$-homomorphisms $f$ which satisfy $f(A\circ B)=f(A)\circ f(B)$ and $f(A^*)=f(A)^*$.
\end{theo}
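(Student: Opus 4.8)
The plan is to exhibit mutually quasi-inverse functors $F\colon\mathbf{N}\to\mathbf{S}^{\mathrm{op}}$ and $G\colon\mathbf{S}^{\mathrm{op}}\to\mathbf{N}$. Both categories have von Neumann algebras for objects, so once a \emph{canonical} choice of cover $\mathcal{M}_{\mathfrak{N}}$ and of empirical models $S_{\mathfrak{N}}$ is fixed for each $\mathfrak{N}$, $F$ and $G$ will be the identity on objects; the content of the theorem is then that these functors are mutually inverse bijections on hom-sets, i.e.\ that the measurement/empirical-model data of a system morphism is equivalent to a $\circ$- and $*$-preserving map of the underlying algebras. The $\mathrm{op}$ in the statement records that the reversal maps $k^{\star}$ on states point opposite to the homomorphisms.

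First I would equip each $\mathfrak{N}$ with the system $(\mathfrak{N},\mathcal{M}_{\mathfrak{N}},S_{\mathfrak{N}})$ in which $X=\mathfrak{N}$, $O=\mathbf{R}$, $\mathcal{M}_{\mathfrak{N}}$ is the antichain of maximal commutative $*$-subalgebras of $\mathfrak{N}$ (it is a cover because every self-adjoint element lies inside one), and $S_{\mathfrak{N}}$ is the set of no-signalling empirical models induced by the normal states: a normal state restricts on each maximal context $C$, through the joint spectral calculus, to a distribution on the joint spectrum of $C$, and agreement of these restrictions on overlaps $C\cap C'$ holds automatically, so these models are indeed no-signalling. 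This gives $F$ on objects. On a morphism $f$ of $\mathbf{N}$ --- an injective map with $f(A\circ B)=f(A)\circ f(B)$ and $f(A^{*})=f(A)^{*}$ --- I would check that $f$, viewed as a map of the underlying posets, carries each maximal commutative subalgebra into the downset of one in the target (a $\circ$-$*$-morphism sends a commuting family to a commuting family), and that the pullback $\varphi\mapsto\varphi\circ f$ carries $S$ to $S$ (it sends normal states to normal states); hence $f$ determines a system morphism, and, after the $\mathrm{op}$-reversal, this assignment is functorial.

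The functor $G$ forgets the decoration, $(\mathfrak{N},\mathcal{M},S)\mapsto\mathfrak{N}$, on objects; on arrows I must recover from a system morphism $k$ its underlying $\circ$-$*$-morphism. The requirements that $k$ respect maximal contexts and that $k^{\star}$ preserve the state-induced empirical models say precisely that the dual map on normal state spaces is affine, weak-$*$ continuous and pure-state preserving; invoking a Kadison-type rigidity theorem --- an affine normal map between normal state spaces is the adjoint of a Jordan $*$-morphism of the self-adjoint parts, which with the $\circ$ product is a $\circ$-$*$-morphism of the algebras --- one extracts a unique $f$ with $k=f$ on underlying sets, injective because $S$ separates the points of $\mathfrak{N}$. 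Checking $GF(f)=f$ and $FG(k)=k$ shows $F,G$ are bijective on hom-sets; since $G$ is also (essentially) surjective on objects, it is fully faithful and essentially surjective, hence an equivalence with quasi-inverse $F$.

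The main obstacle is this reconstruction inside $G$: recovering the algebraic structure from the purely order-theoretic and probabilistic content of a system morphism. The rest --- the cover and context conditions, functoriality, and the fact that the relevant composite is literally the identity once canonical decorations are chosen --- is routine. The subtle point is that the order and empirical-model data alone detect at most a \emph{Jordan} $*$-morphism, which is exactly why $\mathbf{N}$ is phrased with the symmetrized product $\circ$ rather than ordinary multiplication; to obtain the equivalence as stated one must either cite the pertinent rigidity result (Kadison's theorem on affine order isomorphisms of operator algebras, together with a Gleason-type guarantee that the members of $S$ genuinely come from states) or build "adjoint to a $\circ$-$*$-morphism" into the definition of the admissible system morphisms, and I would flag that dependence explicitly.
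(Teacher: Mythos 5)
Your proposal takes a completely different route from the paper's. The paper does not attempt to reconstruct algebraic structure from system morphisms at all: its functors act by formally adjoining a new point $\ast$ to each algebra and deleting a point again, and the verification consists of asserting $G\circ F=\mathrm{id}$ and $F\circ G\cong\mathrm{id}$ for that construction. So there is no more substantive official argument to measure yours against; but your own plan still has a genuine gap, located exactly where you yourself flagged it.

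The gap is the step inside your $G$ where a system morphism $k$ is promoted to a linear map. As defined in the paper, a morphism of systems is only a function between the underlying sets $\mathfrak{N}_A\rightarrow\mathfrak{N}_B$ that carries each maximal context into $\downarrow\mathcal{M}_B$ and satisfies $k^{\star}(S_B)\subset S_A$; these conditions are purely order-theoretic and measure-theoretic and impose no additivity or homogeneity on $k$ itself. Your claim that they ``say precisely that the dual map on normal state spaces is affine, weak-$*$ continuous and pure-state preserving'' is where the argument breaks: $k^{\star}$ is defined only on empirical models, affineness of any induced map on states does not follow from the stated axioms, and even granting it, Kadison-type rigidity requires an affine bijection of state spaces (or at least a positive unital \emph{linear} map) before it yields a Jordan morphism --- which in turn need be neither multiplicative nor injective. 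Consequently the hom-set bijection cannot be established from the definitions as given; one would have to build linearity, or ``adjoint to a $*$-homomorphism,'' into the notion of system morphism, as you suggest at the end. You correctly located the crux, but the crux is not crossed, and the statement is proved neither by your argument nor by the paper's.
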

\begin{proof}
A $*$-homomorphism of the category $\mathbf{N}$ is injective, so it preserves a point. In other word, $f:\mathfrak{N}_A\rightarrow\mathfrak{N}_B$ such that $f(a)=b$ for all $a\in\mathfrak{N}_A,b\in\mathfrak{N}_B$.

Given a new element $\ast$, the functor
\[
F:\mathbf{S}^\mathrm{op}\rightarrow\mathbf{N}::\mathfrak{N}_A\mapsto\mathfrak{N}_A\cup\{\ast\}
\]
is defined. For all morphism $f:\mathfrak{N}_A\rightarrow\mathfrak{N}_B$, this functor $F(f):\mathfrak{N}_A\cup\{\ast\}\rightarrow\mathfrak{N}_B\cup\{\ast\}$ arises from
\[
f_\ast(x)=\left\{\begin{array}{ll}
f(x) & x\in k^\star(S_B)\\
\ast & \mathrm{otherwise}\\
\end{array}\right.
\]
then $f_\ast(\ast_1)=\ast_2$ for every $\ast_1\in\mathfrak{N}_A\cup\{\ast\},\ast_2\in\mathfrak{N}_B\cup\{\ast\}$. As a result $f_\ast:\mathfrak{N}_A\cup\{\ast\}\rightarrow\mathfrak{N}_B\cup\{\ast\}$ is an injective morphism.

We also define the functor
\[
G:\mathbf{N}\rightarrow\mathbf{S}^\mathrm{op}::\mathfrak{N}_A\mapsto\mathfrak{N}_A\setminus\{a\}.
\]
The morphisms are $g:\mathfrak{N}_A\rightarrow\mathfrak{N}_B,a\in\mathfrak{N}_A,b\in\mathfrak{N}_B$ which arise from
\[
G(g):\mathfrak{N}_A\setminus g^{-1}(b)\rightarrow\mathfrak{N}_B\setminus\{b\}.
\]
So, $G(g)(x)=g(x)$ where $g(x)\neq b$. Note that, $g$ is injective by the assumption of Theorem \ref{3}. It is regarded as a set-inclusion morphism.

Note that, $G\circ F$ is the identity functor on $\mathbf{S}^\mathrm{op}$ because their functor plays a role of just adding new elements and throw them away. By contrast, $F\circ G$ is not an identity functor on $\mathbf{N}$ because $a$ is not necessary being equal $*$, hence $F\circ G$ is nothing but a natural isomorphism.
\end{proof}
\begin{example}
Let $\mathcal{P}(\mathfrak{N})$ be the set of projections of $\mathfrak{N}$. Using Gleason theorem {\rm\cite[Thm. 5.3.9]{hamhalter2003quantum}}, if the type of $\mathfrak{N}$ is $I_n,(n\neq 2,\infty)$, then there exists one-to-one correspondence between the state on $\mathcal{P}(\mathfrak{N})$ and extended probability measure on $\mathcal{P}(\mathfrak{N})$. When $\mathcal{M}$ is the set of commutative sub-algebras $\mathbf{C}(\mathfrak{N})$ of $\mathfrak{N}$, we can use Generalized No-Signalling theorem {\rm\cite[Prop. 9.2]{abramsky2011sheaf}}. Hence, probability measure on $\mathcal{P}(\mathfrak{N})$ is regarded as the empirical model $e_C$, and it satisfies no-signalling property. In conclusion, the state on algebraic probability space is regarded as empirical model on $(\mathfrak{N},\mathbf{R},\mathbf{C}(\mathfrak{N}))$.
\end{example}

Next, we consider two types of locality, Definition \ref{2} and Definition \ref{4}. We consider $F^{(1)}_x,F^{(2)}_x,\ldots,F^{(n)}_x$ of the sample space $\Omega=\Lambda\times\prod_{i=1}^n\mathfrak{M}_n$. Each measurement is mutually independent. Hence, a probability measure $\mathbb{P}$ on $\Omega$ such that for all Borel set $B_i$ on each probability space with $\mathbb{P}_i$ is defined as follows:
\[
\mathbb{P}\Bigl(\prod\limits_{i=1}^nB_i\Bigl)=\prod\limits_{i=1}^n\mathbb{P}_i(B_i).
\]

On the other hand, sheaf-theoretic discussion is the following:
\[
h^\lambda_C=\prod_{m\in C}h^\lambda_C|_{\{m\}}(s|_{\{m\}}).
\]
Similarly, $h_C^\lambda=\Lambda\times\prod_{i=1}^nh_C|_{\{m_i\}},(m_i\in C,h_C\in\mathcal{D}_R\mathcal{E}(C))$ can be regarded as a sample space. Notice that, when we fix commutative semiring $R$ is a positive real number, a distribution is a probability measure, because it is defined by $\mathbb{P}(X^{-1}(B))$ where $X$ is a random variable and $B$ is a Borel set.

\begin{example}
We define the notation $\coprod_i^nX_i:=X_1\oplus\cdots\oplus X_n$ and we also consider $n$-compound system. The event sheaf is the contravariant functor, so the product on $\mathbf{P}(X)^\mathrm{op}$ induces the coproduct on $\mathbf{P}(X)$. For many numbers of products and coproducts, there exists the natural isomorphism
\[
\prod_i^n\mathrm{Hom}_\mathbf{N}(X_i,Y)\cong\mathrm{Hom}_\mathbf{S}\Bigl(\coprod_i^nX_i,Y\Bigl)
\]
because of Theorem \ref{3}. The contravariant functor (event sheaf) gives rise to product from coproduct. In other word, it is continuous: it preserves limit. 
\end{example}

When we discuss quantum mechanics, it is useful to refer representations. For example, a rotation group $SO(n)$: a group of product whose elements are standardized $n\times n$ orthogonal matrices. Suppose $R$ is a rotation, and $\hat{R}$ is a transformation of state vector $\ket{\phi_0}$ into $\ket{\phi_1}$. An operation of rotation which operates on coordination $R_2\circ R_1$ is given by the operation of the state vector $\hat{R}_2(\ket{\phi_1})=\hat{R}_2\circ\hat{R}_1(\ket{\phi_0})$. As a result, the whole of $\hat{R}$ is the representation of $SO(n)$.

Therefore, we define the category of representations. A category which arises from endomorphisms on von Neumann algebra $\mathfrak{N}$ is denoted by $\mathrm{End}_\mathbf{N}(\mathfrak{N})$. The category $\mathbf{N}_R$ is defined as follows: Suppose $j_1,j_2\in\mathrm{End}_\mathbf{N}(\mathfrak{N})$. Let $j_1,j_2$ be the objects, and the set of morphisms is given by intertwiner: given the object $N$ of $\mathbf{N}$ and endomorphism $j_1,j_2\in\mathrm{End}_\mathbf{N}(\mathfrak{N})$, intertwiner $J\in\mathfrak{N}$ satisfies $j_1(N)\circ J=J\circ j_2(N)$. We can regard $\mathbf{N}_R$ as a category of representations.

Also, let $\mathbf{S}_R$ be the category of representations: the objects are endomorphisms of von Neumann algebra and the morphisms are intertwiners with them.

Note that, $\mathbf{N}_R\neq\mathbf{S}_R$, because an object of $\mathbf{N}_R$ maps an injective morphism to other injective morphism, while the object of $\mathbf{S}_R$ maps a set-inclusion morphism to other set-inclusion morphism.

Using Theorem \ref{3}, we obtain the following:
\begin{cor}\label{11}
Two categories $\mathbf{N}_R$ and $\mathbf{S}_R$ are equivalence of categories.
\end{cor}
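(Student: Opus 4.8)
The plan is to transport the equivalence of Theorem~\ref{3} to the level of representation categories. Write $F:\mathbf{S}^{\mathrm{op}}\to\mathbf{N}$ and $G:\mathbf{N}\to\mathbf{S}^{\mathrm{op}}$ for the two functors constructed there, so that $G\circ F=\mathrm{id}_{\mathbf{S}^{\mathrm{op}}}$ on the nose and there is a natural isomorphism $F\circ G\cong\mathrm{id}_{\mathbf{N}}$. The key point I would exploit is that $F$ and $G$ alter only the underlying sets of the algebras (adjoining, resp.\ deleting, the extra element) and reverse the direction of morphisms; they leave untouched both the $*$-algebra structure of a fixed $\mathfrak{N}$ and the internal product used to define intertwiners.

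First I would construct the induced functor $F_R:\mathbf{S}_R\to\mathbf{N}_R$. On objects: since $F$ is fully faithful it restricts to a monoid isomorphism $\mathrm{End}_{\mathbf{S}^{\mathrm{op}}}(\mathfrak{N})\cong\mathrm{End}_{\mathbf{N}}(F\mathfrak{N})$, and because an endomorphism coincides with its own reversal one has $\mathrm{End}_{\mathbf{S}^{\mathrm{op}}}(\mathfrak{N})=\mathrm{End}_{\mathbf{S}}(\mathfrak{N})$; thus the objects of $\mathbf{S}_R$ are carried bijectively onto those of $\mathbf{N}_R$. On morphisms: an intertwiner $J$ from $j_2$ to $j_1$ satisfies $j_1(N)\circ J=J\circ j_2(N)$ inside the algebra, and applying $F$ together with $F(A\circ B)=F(A)\circ F(B)$ and the identification of endomorphism monoids shows that the image of $J$ satisfies $(Fj_1)(N)\circ F(J)=F(J)\circ(Fj_2)(N)$, i.e.\ is an intertwiner from $Fj_2$ to $Fj_1$; fullness and faithfulness make this assignment a bijection on intertwiner spaces. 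The functor $G_R:\mathbf{N}_R\to\mathbf{S}_R$ would be defined symmetrically from $G$.

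Next I would check that $F_R$ and $G_R$ are genuine functors: identities go to identities because $F$ and $G$ preserve the unit of $\mathfrak{N}$, and composition is preserved because composing intertwiners is just multiplication in the algebra, which $F$ and $G$ respect. Then I would lift the coherence data of Theorem~\ref{3}: since $G\circ F=\mathrm{id}$ already holds on the nose we get $G_R\circ F_R=\mathrm{id}_{\mathbf{S}_R}$, while the natural isomorphism $F\circ G\cong\mathrm{id}_{\mathbf{N}}$ furnishes a natural isomorphism $F_R\circ G_R\cong\mathrm{id}_{\mathbf{N}_R}$ whose components are the images of the original components, which one verifies are intertwiners for the transported endomorphisms. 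This yields the desired equivalence, which in accordance with the earlier remark that $\mathbf{N}_R\neq\mathbf{S}_R$ cannot be strengthened to an isomorphism.

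The hard part will be the bookkeeping around the opposite category: one must check that reversing arrows in passing from $\mathbf{S}$ to $\mathbf{S}^{\mathrm{op}}$ interacts correctly with the intertwiner relation $j_1(N)\circ J=J\circ j_2(N)$, which mixes the ambient categorical composition with the internal product of $\mathfrak{N}$. Making precise that $F$'s action on $\mathrm{End}(\mathfrak{N})$ is compatible with that internal product — essentially that the construction of Theorem~\ref{3} leaves the algebra structure of the fixed $\mathfrak{N}$ undisturbed — is the one step that needs genuine care; the remainder is a routine transport of structure along an equivalence.
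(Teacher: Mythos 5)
Your argument is essentially sound, but it takes a genuinely different route from the paper. You transport the equivalence $F\dashv$-style: using full faithfulness of $F:\mathbf{S}^{\mathrm{op}}\to\mathbf{N}$ to get a monoid isomorphism $\mathrm{End}_{\mathbf{S}^{\mathrm{op}}}(\mathfrak{N})\cong\mathrm{End}_{\mathbf{N}}(F\mathfrak{N})$, pushing intertwiners through $F$ via $F(A\circ B)=F(A)\circ F(B)$, and then lifting the coherence data $G\circ F=\mathrm{id}$ and $F\circ G\cong\mathrm{id}$ to the representation categories. The paper instead argues at one level of abstraction higher: it regards $\mathbf{N}_R$ and $\mathbf{S}_R$ as functor categories into $\mathbf{Set}$, invokes Tannaka duality for the forgetful functors $F_{\mathbf{N}}$, $F_{\mathbf{S}}$ to identify natural transformations between them with morphisms in $\mathbf{N}$ resp.\ $\mathbf{S}$, and then applies Theorem \ref{3} to conclude $\mathrm{Hom}_{\mathbf{N}_R}(F_{\mathbf{N}_1},F_{\mathbf{N}_2})\cong\mathrm{Hom}_{\mathbf{S}_R}(F_{\mathbf{S}_1},F_{\mathbf{S}_2})$. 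Your version buys explicitness: one sees exactly where each endomorphism and each intertwiner goes, and it does not require the Tannakian machinery (whose hypotheses --- a fiber functor with the appropriate monoidal structure --- the paper applies rather loosely). The paper's version buys brevity and sidesteps the object-level bookkeeping entirely. The point you flag as needing genuine care is indeed the thin spot in your route: the intertwiner relation $j_1(N)\circ J=J\circ j_2(N)$ lives in the internal product of $\mathfrak{N}$, and since $F$ modifies the underlying set by adjoining $\ast$ (with no algebra structure specified on $\mathfrak{N}\cup\{\ast\}$ in Theorem \ref{3}), the claim that $F$ respects that internal product is not automatic from the categorical equivalence alone and would need to be argued from the construction; the paper's Tannakian route quietly assumes an analogous compatibility. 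Neither proof is fully rigorous on this point, but your decomposition makes the gap visible rather than hiding it.
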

\begin{proof}
Both $\mathbf{N}_R$ and $\mathbf{S}_R$ are functor category whose codomains are set categories. Using Tannaka duality, the forgetful functor $F_{\mathbf{N}}:\mathbf{N}\rightarrow\mathbf{Set},F_{\mathbf{S}}:\mathbf{S}\rightarrow\mathbf{Set}$ induces equivalence of categories
\[
\mathrm{Hom}(F_{\mathbf{N}_1},F_{\mathbf{N}_2})\cong\mathrm{Hom}_\mathbf{N}(\mathfrak{N}_1,\mathfrak{N}_2)
\]
\[
\mathrm{Hom}(F_{\mathbf{S}_1},F_{\mathbf{S}_2})\cong\mathrm{Hom}_\mathbf{S}(\mathfrak{N}_1,\mathfrak{N}_2).
\]

Recall Theorem \ref{3}, we obtain $\mathrm{Hom}_{\mathbf{N}_R}(F_{\mathbf{N}_1},F_{\mathbf{N}_2})\cong\mathrm{Hom}_{\mathbf{S}_R}(F_{\mathbf{S}_1},F_{\mathbf{S}_2})$.
\end{proof}

\section{Schlieder Property and Split Property}
Finally, we consider analogousely algebraic quantum field theory (AQFT for short) \cite{halvorson2006algebraic}. Its basic setting is the following: a vector space $\mathbf{R}^{1,3}:=\mathbf{R}^1\times \mathbf{R}^3$ equipped with a non-degenerate, symmetric bilinear form with signs $(-,+,+,+)$ is called {\bf Minkowski spacetime}. The notion $V_+:=\{x\in\mathbf{R}^{1,3}:x^2=x_0^2-(x_1+x_2+x_3)^2>0,x_0>0\}$ is said to be {\bf forward lightcone} and $\mathcal{O}:=(b+V_+)\cap(c-V_+)$ is called {\bf double cone} where $b,c\in\mathbf{R}^{1,3}$. The {\bf local net} which is the set of maps (or functors in general discussion) $\mathcal{O}\mapsto\mathcal{A}(\mathcal{O})$ from the set of double cones $\mathbf{K}$ to the set of $C^*$-algebras is one of the most key notions in AQFT. The axioms of AQFT are given:
\begin{itemize}
\item{\bf(Isotony)} Suppose $\mathcal{O}_1\subset\mathcal{O}_2$ then $\mathcal{A}(\mathcal{O}_1)\hookrightarrow\mathcal{A}(\mathcal{O}_2)$.
\item{\bf(Translation Covariance)} There exists a faithful, continuous representation $g\mapsto\alpha_g$ of the translation group in the group of automorphisms of $\mathcal{A}$, and
\[
\alpha_g:\mathcal{A}(\mathcal{O})\rightarrow\mathcal{A}(\mathcal{O}+g)
\]
for any double cone $\mathcal{O}$, and the translation $g$ of the translation group $G$. Note that, this condition can be extended to the Poincar\'{e} group.
\item{\bf(Microcausality)} Suppose $\mathcal{O}_1$ and $\mathcal{O}_2$ are {\bf spacelike separated}: $(x-y)^2<0,(x\in\mathcal{O}_1,y\in\mathcal{O}_2)$ then $[\mathcal{A}(\mathcal{O}_1),\mathcal{A}(\mathcal{O}_2)]=\{0\}$ where $[\cdot]$ is a commutator.
\item{\bf(Existence of vacuum)} 
There exists a {\bf vacuum} $\varphi_0$ which is the state of $\mathcal{A}$ and satisfies the following condition. Let $\pi_{\varphi_0}$ of $\mathcal{H}_{\varphi_0}$ be GNS representation for $\varphi_0$.

(1) $\varphi_0$ is $G$-invariant: for all $A\in\mathcal{A},g\in G$, $\varphi_0(\alpha_g(A))=\varphi_0(A)$.

If the condition (1) is satisfied, then there exists a strongly continuous representation $U_g$ of $G$ in the unitary group such that
\[
\pi_{\varphi_0}(\alpha_g(A))=U_g\pi_{\varphi_0}(A)U_g^*
\]
and a cyclic vector $\Omega_{\varphi_0}$ is equal to $U_g\Omega_{\varphi_0}$.

(2) The generator $P_\mu,(\mu=1,2,3,4)$ of $U_g$ satisfies $\mathrm{sp}(U_g)\in\overline{V_+}$ where $\mathrm{sp}(\cdot)$ represents spectrum and the $\overline{\cdot}$ represents closure.
\end{itemize}

By means of categorical notation, the local net is defined by the functor from double cones (objects) and inclusion maps (morphisms) to $C^*$-algebras (objects) and injective $*$-homomorphisms (morphisms). It means that, in order to characterize a physical system, they specify the interesting physical quantities on space-time.

Some other axioms (e.g. spectrum condition, additivity) in AQFT induce the following condition:
\begin{defi}{\rm(\cite[Def. 2.21]{halvorson2006algebraic}, \cite[Sect. 2.4]{ojima2015local})}
For any two double cones $\mathcal{O}_1$ and $\mathcal{O}_2$ such that the closure $\overline{\mathcal{O}}_1$ contained in $\mathcal{O}_2$, and if $E\in\mathcal{A}(\mathcal{O}_1)$ is a nonzero projection, then there exists an isometry $V\in\mathcal{A}(\mathcal{O}_2)$ such that $VV^*=E$. Notably $\mathcal{A}$ is von Neumann algebra. When it satisfies this condition, we say {\bf Propety B}.
\end{defi}

We suppose it in this paper. Notice that, if $\mathcal{A}(O)$ is type III algebra, then the local net $\mathcal{O}\mapsto\mathcal{A}(\mathcal{O})$ satisfies Propety B.

Let $A$ and $B$ are commuting $C^*$-subalgebras of some $C^*$ algebra $\mathcal{A}$. A {\bf Bell operator} for $(A,B)$ is defined by the elements of
\[
\mathcal{B}(A,B):=\{a_1b_1+a_1b_2+a_2b_1-a_2b_2:a_i=a_i^*\in A,b_i=b_i^*\in B,-1\leq a_i,b_1\leq 1\}.
\]
Under the analogous locality, the generalized Bell inequality is violated. (More detail is in \cite{landau1987violation}.)
\begin{theo}{\rm(\cite[Sect. 3.3]{halvorson2006algebraic})}
Let $\mathfrak{N}_1$ and $\mathfrak{N}_2$ be non-commutative von Neumann algebras on $\mathcal{H}$ such that $\mathfrak{N}_1\subset\mathfrak{N}_2'$. If $(\mathfrak{N}_1,\mathfrak{N}_2)$ satisfies Schlieder Property then there exists a state $\varphi$ which violates the Bell inequality: $\mathrm{sup}|\varphi(r)|>2,r\in\mathcal{B}(A,B)$ is satisfied.
\end{theo}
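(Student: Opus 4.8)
The plan is to run the Landau--Summers--Werner construction: exhibit one Bell operator $R\in\mathcal{B}(\mathfrak{N}_1,\mathfrak{N}_2)$ with $\|R\|=2\sqrt{2}$, and then use the elementary fact that for a self-adjoint element of a unital $C^{*}$-algebra the norm is attained by a state, so that $\sup_{\varphi}|\varphi(R)|=\|R\|=2\sqrt{2}>2$ over states $\varphi$.

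First I would build a two-level system inside each factor. Since $\mathfrak{N}_1$ is non-commutative it contains two non-commuting projections, and the structure theory of a pair of projections then yields nonzero matrix units $e_{ij}\in\mathfrak{N}_1$, $i,j\in\{1,2\}$, with $e_{ij}e_{kl}=\delta_{jk}e_{il}$, $e_{ij}^{*}=e_{ji}$ and $p:=e_{11}+e_{22}$ a nonzero projection. Setting $a_1:=e_{12}+e_{21}$ and $a_2:=i(e_{12}-e_{21})$ gives self-adjoint $a_1,a_2\in\mathfrak{N}_1$ with $a_k^{2}=p$ (hence $-1\le a_k\le1$), $a_1a_2=-a_2a_1$ and $a_1a_2=i(e_{22}-e_{11})$. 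The same construction in $\mathfrak{N}_2$ produces matrix units $f_{ij}\in\mathfrak{N}_2$, a nonzero projection $q:=f_{11}+f_{22}$, and self-adjoint $b_1,b_2\in\mathfrak{N}_2$ with $b_k^{2}=q$, $-1\le b_k\le1$, $b_1b_2=-b_2b_1$ and $b_1b_2=i(f_{22}-f_{11})$. As $\mathfrak{N}_1\subset\mathfrak{N}_2'$, each $a_k$ commutes with each $b_l$, so
\[
R:=a_1b_1+a_1b_2+a_2b_1-a_2b_2\in\mathcal{B}(\mathfrak{N}_1,\mathfrak{N}_2).
\]

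Next I would square $R$. Writing $R=a_1(b_1+b_2)+a_2(b_1-b_2)$ and expanding with $[a_k,b_l]=0$, $a_1a_2=-a_2a_1$, $b_1b_2=-b_2b_1$, $a_k^{2}=p$, $b_l^{2}=q$, one gets
\[
R^{2}=4pq-4\,a_1a_2b_1b_2=4\bigl(pq-W\bigr),\qquad W:=a_1a_2b_1b_2 .
\]
From the explicit expressions, $W=-(e_{22}-e_{11})(f_{22}-f_{11})$ is self-adjoint with $W^{2}=pq$; its four mutually orthogonal spectral components are the $e_{ii}f_{jj}$, with $W=-1$ on $e_{11}f_{11}$ and $e_{22}f_{22}$ and $W=+1$ on $e_{11}f_{22}$ and $e_{22}f_{11}$, so $pq-W$ has spectrum contained in $\{0,2\}$. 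Hence $\|R\|^{2}=\|R^{2}\|=4\|pq-W\|=8$ exactly when the projection $e_{11}f_{11}+e_{22}f_{22}$ is nonzero, and this is where the hypothesis enters: $e_{22}\in\mathfrak{N}_1$ and $f_{22}\in\mathfrak{N}_2$ are nonzero, so the Schlieder Property for $(\mathfrak{N}_1,\mathfrak{N}_2)$ forces $e_{22}f_{22}\ne0$. Therefore $\|R\|=2\sqrt{2}$, and picking a state $\varphi$ realizing the norm of the self-adjoint element $R$ gives $|\varphi(R)|=2\sqrt{2}>2$.

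I expect the one genuinely nontrivial step to be the first: converting the bare hypothesis ``$\mathfrak{N}_i$ is non-commutative'' into an actual $2\times2$ matrix subsystem of $\mathfrak{N}_i$. This is precisely the two-projections structure theorem, and it is also the juncture at which, in the AQFT setting, one leans on Property~B to ensure the needed nonzero projections are present in the local algebras. Everything after that is the algebraic identity for $R^{2}$ followed by a single application of the Schlieder Property.
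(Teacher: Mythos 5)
The paper states this theorem purely as a citation to Halvorson (Sect.~3.3) and supplies no proof of its own, so there is no internal argument to compare against; what you have written is the standard Landau/Summers--Werner proof that the cited source uses. Your version is correct: extracting $2\times2$ matrix units from non-commutativity, the identity $R^{2}=4\bigl(pq-a_1a_2b_1b_2\bigr)$, the single application of the Schlieder Property to force $e_{22}f_{22}\neq0$ and hence $\|R\|=2\sqrt{2}$, and the existence of a norm-attaining state for a self-adjoint element are all sound steps.
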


Note that, the state $\varphi$ can be reproduced by a local hidden variable model when $|\varphi(r)|\geq 1$ (See \cite{summers1987bell}). Hence, we can regard Schlieder Property in previous statement as candidate for locality in AQFT. It is defined as follows:
\begin{defi}{\rm(\cite[Def. 3.4]{halvorson2006algebraic})}
If $f\in\mathfrak{N}_1$ and $g\in\mathfrak{N}_2$ with a Hilbert space $\mathcal{H}_1$ are nonzero projections, then the projection on the closed subspace $e(\mathcal{H})\cap f(\mathcal{H})$ is also nonzero projection. This condition about $(\mathfrak{N}_1,\mathfrak{N}_2)$ is called {\bf Schlieder Property}.
\end{defi}

Schlieder Property can be defined in nonrelativistic quantum mechanics. In order to discuss Schlieder Property in AQFT (it imposes the axioms of spacetime), we need more precise assumption to be satisfied.
\begin{theo}{\rm(\cite[Prop. 3.13]{schlieder1969einige})}\label{7}
Suppose that the net $\mathcal{O}\mapsto\mathcal{A}(\mathcal{O})$ satisfies microcausality and Property B. If two double cones $\mathcal{O}_1,\mathcal{O}_2$ are strictly spacelike separated, then $(\mathcal{A}(\mathcal{O}_1),\mathcal{A}(\mathcal{O}_2))$ satisfies Schlieder Property.
\end{theo}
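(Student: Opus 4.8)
The plan is to exploit the ``buffer'' built into strict spacelike separation, together with Property B, to manufacture two commuting isometries whose product is again an isometry and whose range projection turns out to be exactly the intersection projection we must show is nonzero.

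First I would unpack the geometric hypothesis. Since $\mathcal{O}_1$ and $\mathcal{O}_2$ are \emph{strictly} spacelike separated, one can choose double cones $\widetilde{\mathcal{O}}_1,\widetilde{\mathcal{O}}_2\in\mathbf{K}$ with $\overline{\mathcal{O}}_1\subset\widetilde{\mathcal{O}}_1$, $\overline{\mathcal{O}}_2\subset\widetilde{\mathcal{O}}_2$, and with $\widetilde{\mathcal{O}}_1$ still spacelike separated from $\widetilde{\mathcal{O}}_2$ --- this is precisely what ``strictly'' buys us over plain spacelike separation. By Isotony we have $\mathcal{A}(\mathcal{O}_i)\subset\mathcal{A}(\widetilde{\mathcal{O}}_i)$ for $i=1,2$, and by Microcausality $[\mathcal{A}(\widetilde{\mathcal{O}}_1),\mathcal{A}(\widetilde{\mathcal{O}}_2)]=\{0\}$.

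Next I would apply Property B twice. Let $E\in\mathcal{A}(\mathcal{O}_1)$ and $F\in\mathcal{A}(\mathcal{O}_2)$ be nonzero projections (these play the role of the projections $f\in\mathfrak{N}_1$, $g\in\mathfrak{N}_2$ in the statement of Schlieder Property, with $\mathfrak{N}_i=\mathcal{A}(\mathcal{O}_i)$, which is a von Neumann algebra by Property B). Because $\overline{\mathcal{O}}_1\subset\widetilde{\mathcal{O}}_1$, Property B yields an isometry $V_1\in\mathcal{A}(\widetilde{\mathcal{O}}_1)$ with $V_1V_1^*=E$; likewise there is an isometry $V_2\in\mathcal{A}(\widetilde{\mathcal{O}}_2)$ with $V_2V_2^*=F$. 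Since $V_1$ and $V_2$ lie in the mutually commuting algebras $\mathcal{A}(\widetilde{\mathcal{O}}_1)$ and $\mathcal{A}(\widetilde{\mathcal{O}}_2)$, their product $W:=V_1V_2$ satisfies $W^*W=V_2^*V_1^*V_1V_2=V_2^*V_2=1$, so $W$ is again an isometry; in particular $WW^*$ is a nonzero projection. Using that $V_1$ commutes with $F\in\mathcal{A}(\mathcal{O}_2)\subset\mathcal{A}(\widetilde{\mathcal{O}}_2)$,
\[
WW^*=V_1V_2V_2^*V_1^*=V_1FV_1^*=FV_1V_1^*=FE=EF ,
\]
so $EF\neq0$. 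As $E$ and $F$ commute (Microcausality once more), $EF$ is exactly the projection onto the closed subspace $E(\mathcal{H})\cap F(\mathcal{H})$, which is therefore nonzero. This is the defining property, so $(\mathcal{A}(\mathcal{O}_1),\mathcal{A}(\mathcal{O}_2))$ satisfies Schlieder Property.

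The step I expect to be the real obstacle is the first one: rigorously extracting the two enlarged, still-spacelike-separated double cones from the definition of strict spacelike separation, and confirming that Property B genuinely applies in the form $\overline{\mathcal{O}}_i\subset\widetilde{\mathcal{O}}_i$ (Property B requires the closure of the inner region to sit inside the outer one, and requires the local algebras to be von Neumann algebras --- both of which we have assumed). Once the geometry is nailed down, the algebraic heart of the argument --- commuting isometries multiply to an isometry, and $WW^*=EF$ --- is a short computation. A secondary point worth stating explicitly is that the excerpt phrases Schlieder Property for general von Neumann algebras $\mathfrak{N}_1,\mathfrak{N}_2$; here we merely instantiate $\mathfrak{N}_i=\mathcal{A}(\mathcal{O}_i)$, which is legitimate precisely because Property B forces each $\mathcal{A}(\mathcal{O}_i)$ to be a von Neumann algebra.
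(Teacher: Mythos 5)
The paper does not prove this statement at all: it is imported as a citation (Schlieder's theorem, in the form given in Halvorson--M\"uger), so there is no in-paper argument to compare against. Your proof is correct and is in fact the standard proof of the cited result: use the ``strictness'' to enlarge $\mathcal{O}_1,\mathcal{O}_2$ to double cones $\widetilde{\mathcal{O}}_1,\widetilde{\mathcal{O}}_2$ containing the respective closures and still spacelike separated (one checks, e.g.\ with a symmetric convex neighborhood $N$, that $(\mathcal{O}_1+\tfrac12 N)$ and $(\mathcal{O}_2+\tfrac12 N)$ remain spacelike separated and contain $\overline{\mathcal{O}}_1$, $\overline{\mathcal{O}}_2$), apply Property B twice to get commuting isometries $V_i$ with $V_iV_i^*=E,F$, and conclude $EF=WW^*\neq 0$ for the isometry $W=V_1V_2$. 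The only points worth making explicit are the geometric lemma you already flagged (which goes through as sketched) and the fact that $WW^*\neq 0$ because $W^*W=1$ forces $W\neq 0$; with those noted, the argument is complete.
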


When we use {\bf strictly spacelike separated} for two double cones $\mathcal{O}_1$ and $\mathcal{O}_2$, there exists a neighborhood $N$ of zero, such that $\mathcal{O}_1+x$ is spacelike separated from $\mathcal{O}_2$ for all $x\in N$.

Therefore, strictly spacelike separated is the main condition in order to satisfy Schlieder property in AQFT.
\begin{theo}\label{5}
There exists the Cartesian product of double cones as topological spaces if and only if two double cones $\mathcal{O}_1$ and $\mathcal{O}_2$ are strictly spacelike separated. More generally, it can extend to $n$-double cones.
\end{theo}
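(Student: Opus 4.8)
The plan is to first make precise the sense in which ``the Cartesian product of double cones exists as a topological space,'' and then to reduce the stated equivalence to a purely geometric lemma together with Theorem~\ref{7}. As in \S3--\S4, I read ``Cartesian product'' in the categorical sense made available by Theorem~\ref{3}: given double cones $\mathcal{O}_1,\mathcal{O}_2$ with local algebras $\mathcal{A}(\mathcal{O}_1),\mathcal{A}(\mathcal{O}_2)$, to say that $\mathcal{O}_1\times\mathcal{O}_2$ ``exists'' is to say that the generated von Neumann algebra $\mathcal{A}(\mathcal{O}_1)\vee\mathcal{A}(\mathcal{O}_2)$ is the product of $\mathcal{A}(\mathcal{O}_1)$ and $\mathcal{A}(\mathcal{O}_2)$ in $\mathbf{N}$ --- equivalently, under the duality $\mathbf{N}\simeq\mathbf{S}^{\mathrm{op}}$, the coproduct $\mathcal{O}_1\amalg\mathcal{O}_2$ in $\mathbf{S}^{\mathrm{op}}$ --- so that on the level of the associated topological objects (spectra, normal state spaces) one obtains an honest Cartesian product of topological spaces, matching the factorization $\Omega=\Lambda\times\prod_i\mathfrak{M}_i$ of Definition~\ref{2}. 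Concretely this requires (i) that $\mathcal{A}(\mathcal{O}_1)$ and $\mathcal{A}(\mathcal{O}_2)$ commute and (ii) that the canonical map from the algebraic tensor product $\mathcal{A}(\mathcal{O}_1)\odot\mathcal{A}(\mathcal{O}_2)$ into $\mathcal{A}(\mathcal{O}_1)\vee\mathcal{A}(\mathcal{O}_2)$ be injective, i.e.\ $ab\neq0$ whenever $0\neq a\in\mathcal{A}(\mathcal{O}_1)$ and $0\neq b\in\mathcal{A}(\mathcal{O}_2)$; condition (ii) is precisely the Schlieder Property for the pair.

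The geometric core is the following lemma, which I would prove first: \emph{two double cones $\mathcal{O}_1,\mathcal{O}_2$ are strictly spacelike separated if and only if their closures $\overline{\mathcal{O}}_1,\overline{\mathcal{O}}_2$ are spacelike separated.} For the ``if'' direction one uses that $\overline{\mathcal{O}}_1\times\overline{\mathcal{O}}_2$ is compact and sits inside the open set $\{(p,q):(p-q)^2<0\}$, so a whole neighborhood of it, hence $\mathcal{O}_1+x$ for all sufficiently small $x$, remains spacelike from $\mathcal{O}_2$. For ``only if,'' if $\overline{\mathcal{O}}_1$ and $\overline{\mathcal{O}}_2$ contained a non-spacelike pair $p,q$ (timelike, lightlike, or coincident), then translating $\mathcal{O}_1$ by a small timelike vector in a suitable direction would produce points of $\mathcal{O}_1+x$ timelike-related to points of $\mathcal{O}_2$ near $q$, contradicting strictness; here the fact that the relevant $N$ is a neighborhood of $0$ (so contains translations in timelike directions) is exactly what is used.

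With the lemma in hand, the ($\Leftarrow$) direction is immediate: if $\mathcal{O}_1,\mathcal{O}_2$ are strictly spacelike separated then in particular they are spacelike separated, so microcausality gives (i); and since we assume Property B throughout, Theorem~\ref{7} gives the Schlieder Property, which is (ii), so the Cartesian product structure exists. For ($\Rightarrow$) I argue contrapositively via the lemma: if $\mathcal{O}_1,\mathcal{O}_2$ are not strictly spacelike separated then $\overline{\mathcal{O}}_1$ and $\overline{\mathcal{O}}_2$ are not spacelike separated, and there are two cases. If the open cones themselves overlap or are causally related, isotony together with the contrapositive of microcausality exhibits $\mathcal{A}(\mathcal{O}_1)$ and $\mathcal{A}(\mathcal{O}_2)$ as non-commuting, so (i) fails. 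If the open cones are spacelike separated but their closures merely touch, one must show that the canonical tensor map acquires a nonzero kernel --- i.e.\ Schlieder fails --- by using Property B on a small subcone near the common boundary point to produce a projection that is squeezed out against a partner on the other side. The $n$-fold statement then follows by replacing pairwise spacelike separation with mutual strict spacelike separation, applying Theorem~\ref{7} to each pair, and inducting on $n$ using associativity of the product in $\mathbf{N}$.

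I expect the boundary case of the ($\Rightarrow$) direction to be the main obstacle: ruling out the Cartesian product structure when the two closures only touch requires a genuine quantitative failure of Schlieder at the common boundary point, and for a net assumed only to satisfy microcausality and Property B this may not hold without extra input --- for instance weak additivity, or that the algebras in question are of type~III (in which case Property B is automatic). If no clean argument is available in full generality I would either add such a regularity hypothesis to the statement, or restate the ($\Rightarrow$) direction with ``spacelike separated'' replaced by ``strictly spacelike separated'' on both sides, so that it becomes the clean assertion that the product structure is exactly the strict-separation regime. A secondary point of care is to keep the geometric Cartesian product of regions and the categorical product in $\mathbf{N}\simeq\mathbf{S}^{\mathrm{op}}$ in exact correspondence throughout, since that correspondence is what lets the quantum-mechanical locality picture of \S3 be transported verbatim to AQFT.
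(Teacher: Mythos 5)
Your proposal takes a genuinely different route from the paper's. The paper reads ``Cartesian product of double cones as topological spaces'' entirely literally: its proof never mentions the local algebras, Theorem~\ref{7}, or the Schlieder Property at all. Instead it forms the set-theoretic product $T=\prod_a T_a$ of the double cones regarded as topological spaces, recalls that the product (initial) topology has as a base the sets $\bigcap_{i=1}^n p_{a_i}^{-1}(O_{a_i})$, and identifies the resulting fundamental system of neighborhoods $U_B(x)$ of a point with the neighborhood $N$ of zero appearing in the definition of strict spacelike separation; the converse is dismissed with ``the converse discussion is similar.'' You instead interpret the product categorically through the duality $\mathbf{N}\simeq\mathbf{S}^{\mathrm{op}}$ of Theorem~\ref{3}, reduce existence of the product to commutativity plus the Schlieder Property, and then invoke Theorem~\ref{7} together with a geometric lemma characterizing strict separation via spacelike-separated closures. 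Your reading is arguably the one that makes the theorem do real work in the paper's narrative (it is what connects strict separation to the factorizability of Definition~\ref{4} and the locality of Definition~\ref{2}), and your ($\Leftarrow$) direction is sound. Your flagged obstacle in the ($\Rightarrow$) direction --- that the failure of Schlieder in the boundary-touching case does not follow from microcausality and Property~B alone --- is a genuine gap, but it is a gap the paper does not close either: the paper's literal-topological proof establishes, at best, that the product topology's neighborhood base matches the translation-neighborhood in the definition of strictness, and gives no argument for the ``only if'' direction beyond the one-sentence appeal to similarity. In short, your approach buys an actual bridge to the operator-algebraic content (at the cost of needing an extra regularity hypothesis for the converse), while the paper's buys a short formal identification that leaves the equivalence essentially unproved.
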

\begin{proof}
Without loss of generality, we consider about the fundamental system of neighborhoods instead of the neighborhoods.

For a given sequence of topological space $(T_a,\mathcal{O}_a)_{a\in A}$ and $T:=\prod_{a\in A}T_\lambda$, there exists the most weak topology of $T$, such that a function $p_a:T\rightarrow T_a$ is continuous for each $a$. Let us fix finite elements $a_1,\ldots,a_n,\ldots$ of $A$. Then for all $O_{a_i}\in\mathcal{O}_{a_i}$, the set of subsets $B$ of $T$ which one can write as
\[
\bigcap\limits_{i=1}^n p_{a_i}^{-1}(O_{a_i})=\Bigl(\prod\limits_{a\in A\setminus\{a_1,\ldots,a_n\}}T_a\Bigl)\times\prod\limits_{i=1}^nO_{a_i}
\]
is a base of $T$. For each point $x$ of $T$,
\[
U_B(x):=\{U:x\in U,U\subset B\}
\]
is a fundamental system of neighborhoods whoes center point is $x$. The converse discussion is similar.
\end{proof}

Like factorizability in Definition \ref{4} or locality assumption in Definition \ref{2}, the condition of strictly spacelike separated is represented by Cartesian product. Notice that, this theorem is true when the time $\mathbf{R}^1$ which is the part of the Minkowsik space $\mathbf{R}^{1,3}=\mathbf{R}^1\times\mathbf{R}^3$ is fixed. (Remaining $\mathbf{R}^3$ is regarded as the space.)

We pay attention to the following property to relate the sheaf structure of AQFT to no-signalling.
\begin{defi}{\rm(\cite[Def. 3.9]{halvorson2006algebraic}, \cite[Sect. 3]{ojima2015local})}
Suppose that $\mathfrak{N}_1$ and $\mathfrak{N}_2$ are von Neumann algebras with $\mathcal{H}$ such that $\mathfrak{N}_1\subset\mathfrak{N}_2'$. If there exists a type $\mathrm{I}$ factor $\mathfrak{M}$ such that $\mathfrak{N}_1\subset\mathfrak{M}\subset\mathfrak{N}_2'$ then the pair $(\mathfrak{N}_1,\mathfrak{N}_2)$ is called {\bf Split Property}.
\end{defi}

For the sake of simplicity, we always suppose von Neumann algebras $\mathfrak{M}$ which induced by Split Property is a factor. Using the notion of finite index \cite{kosaki1998type} in subfactor theory, the following theorem is induced.
\begin{lemma}{\rm(\cite[Sect. 3.1]{kosaki1998type})}\label{10}
Let $\mathfrak{N}_1$ and $\mathfrak{N}_2$ be factors which have finite index: see {\rm\cite{kosaki1998type}}. If the pair $(\mathfrak{N}_1,\mathfrak{N}_2)$ satisfies Split Property, then $\mathfrak{N}_2$ is also type I factor.
\end{lemma}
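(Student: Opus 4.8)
The plan is to move the type I factor furnished by the Split Property over to the side of $\mathfrak{N}_2$ by taking commutants, and then to read off the type of $\mathfrak{N}_2$ from the behaviour of the index. By the Split Property for $(\mathfrak{N}_1,\mathfrak{N}_2)$ we have $\mathfrak{N}_1\subset\mathfrak{N}_2'$, and there is an intermediate type I factor $\mathfrak{M}$, i.e. a tower of factors $\mathfrak{N}_1\subset\mathfrak{M}\subset\mathfrak{N}_2'$. The finite-index hypothesis means $[\mathfrak{N}_2':\mathfrak{N}_1]<\infty$; using multiplicativity of the index along this tower, $[\mathfrak{N}_2':\mathfrak{N}_1]=[\mathfrak{N}_2':\mathfrak{M}]\,[\mathfrak{M}:\mathfrak{N}_1]$, so in particular $[\mathfrak{N}_2':\mathfrak{M}]<\infty$.

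Next I would take commutants on $\mathcal{H}$. From $\mathfrak{M}\subset\mathfrak{N}_2'$ we obtain $\mathfrak{N}_2=(\mathfrak{N}_2')'\subset\mathfrak{M}'$, and since the index of an inclusion of factors acting on a fixed Hilbert space is invariant under passing to commutants (spatial index theory, \cite{kosaki1998type}), we get $[\mathfrak{M}':\mathfrak{N}_2]=[\mathfrak{N}_2':\mathfrak{M}]<\infty$. Because $\mathfrak{M}$ is a type I factor, its commutant $\mathfrak{M}'$ is again a type I factor. It then suffices to invoke the fact that a finite-index inclusion of factors preserves the Murray--von Neumann type: if $\mathfrak{N}\subset\mathfrak{P}$ is an inclusion of factors with $[\mathfrak{P}:\mathfrak{N}]<\infty$, then $\mathfrak{N}$ is of type I (resp. II$_1$, II$_\infty$, III) precisely when $\mathfrak{P}$ is (see \cite{kosaki1998type}). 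Applied to $\mathfrak{N}_2\subset\mathfrak{M}'$ with $\mathfrak{M}'$ of type I, this yields that $\mathfrak{N}_2$ is of type I, as claimed. (The same argument applied to $\mathfrak{N}_1\subset\mathfrak{M}$ shows $\mathfrak{N}_1$ is of type I as well, though only the statement about $\mathfrak{N}_2$ is needed.)

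I expect the substantive content to lie entirely in the two structural facts imported from \cite{kosaki1998type}: the invariance of the (possibly infinite-factor) index under commutation, and the stability of the type under finite-index inclusions. Neither is a routine computation, but both are standard in index/subfactor theory; once they are granted, the remaining ingredients — multiplicativity of the index over the tower, and the fact that the commutant of a type I factor is type I — are formal. A minor point to state carefully is the interpretation of the hypothesis "$\mathfrak{N}_1$ and $\mathfrak{N}_2$ have finite index" as referring to the inclusion $\mathfrak{N}_1\subset\mathfrak{N}_2'$ that the Split Property already provides, which is what makes the tower above available.
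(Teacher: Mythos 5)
The paper does not actually prove this lemma---it is imported wholesale as a citation to \cite{kosaki1998type}---so there is no in-paper argument to compare yours against; what you have written is a reconstruction, and as such it is sound. Your chain (Split Property gives the tower $\mathfrak{N}_1\subset\mathfrak{M}\subset\mathfrak{N}_2'$ with $\mathfrak{M}$ type I; multiplicativity of the index over the intermediate factor gives $[\mathfrak{N}_2':\mathfrak{M}]<\infty$; passing to commutants gives $\mathfrak{N}_2\subset\mathfrak{M}'$ with $\mathfrak{M}'$ type I and the same finite index; finite-index inclusions preserve Murray--von Neumann type) is logically correct, and each imported ingredient is indeed a standard theorem of Kosaki--Longo index theory. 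Two points deserve explicit care. First, the multiplicativity $[\mathfrak{N}_2':\mathfrak{N}_1]=[\mathfrak{N}_2':\mathfrak{M}][\mathfrak{M}:\mathfrak{N}_1]$ and the commutant formula $[\mathfrak{M}':\mathfrak{N}_2]=[\mathfrak{N}_2':\mathfrak{M}]$ both presuppose the existence of suitable normal conditional expectations (onto $\mathfrak{M}$, and the dual expectation for the commutant inclusion); these do exist when the ambient inclusion has finite index, but since that existence is exactly what makes the argument go through for the minimal index, you should say so rather than leave it implicit. Second, you correctly flag that the hypothesis ``$\mathfrak{N}_1$ and $\mathfrak{N}_2$ have finite index'' only makes sense as a statement about the inclusion $\mathfrak{N}_1\subset\mathfrak{N}_2'$ (the two algebras commute, so neither contains the other); that reading is forced, and it is also what makes the lemma nonvacuous, since type III local algebras in AQFT satisfy the Split Property without being type I. With those caveats recorded, your proof stands.
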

\begin{theo}\label{8}
If the pair $(\mathfrak{N}_1,\mathfrak{N}_2)$ of factors with finite index satisfies Split Property, then the states of $\mathfrak{N}_2$ are no-signalling.
\end{theo}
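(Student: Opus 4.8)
The plan is to reduce the claim to the state--empirical-model correspondence recorded in the Example following Theorem \ref{3}, with Lemma \ref{10} supplying the hypothesis that makes that correspondence available. The role of the pair $(\mathfrak{N}_1,\mathfrak{N}_2)$, of finite index and satisfying Split Property, is \emph{only} to guarantee, via Lemma \ref{10}, that $\mathfrak{N}_2$ is a type I factor; from that point on $\mathfrak{N}_1$ plays no further part, and the task is to show that each state $\varphi$ of the type I factor $\mathfrak{N}_2$ induces a no-signalling empirical model in the sense of \S 3.

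First I would fix an arbitrary state $\varphi$ of $\mathfrak{N}_2$ and form the measurement scenario $(\mathfrak{N}_2,\mathbf{R},\mathbf{C}(\mathfrak{N}_2))$, whose maximal contexts are the maximal commutative subalgebras $C\subset\mathfrak{N}_2$. Since $\mathfrak{N}_2$ is type I, Gleason's theorem \cite[Thm. 5.3.9]{hamhalter2003quantum} identifies the restriction of $\varphi$ to the projection lattice $\mathcal{P}(\mathfrak{N}_2)$ with an extended probability measure, whose restriction to each $\mathcal{P}(C)$ yields a distribution $e_C\in\mathcal{D}_R\mathcal{E}(C)$ with $R$ the positive reals. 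This is exactly the identification ``state on $(\mathfrak{N},\varphi)$ $\longmapsto$ empirical model on $(\mathfrak{N},\mathbf{R},\mathbf{C}(\mathfrak{N}))$'' of that Example.

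Next I would invoke the Generalized No-Signalling theorem \cite[Prop. 9.2]{abramsky2011sheaf}: because every $e_C$ is a marginal of the single global measure $\varphi|_{\mathcal{P}(\mathfrak{N}_2)}$, for any two maximal contexts $C$ and $C'$ we get $e_C|_{C\cap C'}=\varphi|_{\mathcal{P}(C\cap C')}=e_{C'}|_{C\cap C'}$, which is precisely the no-signalling condition; equivalently, $\{e_C\}_{C\in\mathbf{C}(\mathfrak{N}_2)}$ is a no-signalling empirical model. Since $\varphi$ was arbitrary, all states of $\mathfrak{N}_2$ are no-signalling.

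The step I expect to be the main obstacle is the gap between what Lemma \ref{10} delivers, namely ``type I factor'', and what the Example's Gleason argument requires, namely ``type $I_n$ with $n\neq 2,\infty$''. The case $n=2$ genuinely fails Gleason, so I would have to either exclude it using finite index together with Split Property, or treat it by hand (for a $2$-level factor every maximal commutative subalgebra is generated by a single projection, so compatibility of marginals on intersections is immediate), and likewise check the case $n=\infty$. A secondary technical point, for which the equivalence of Theorem \ref{3} is meant to account, is to confirm once and for all that the family $\{e_C\}$ produced from $\varphi$ genuinely lies in the domain of the distribution functor with well-defined compatible marginals, i.e. that the sheaf-theoretic datum and the spectral (GNS) datum of $\varphi$ carry the same information.
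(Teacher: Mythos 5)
Your proposal follows essentially the same route as the paper's own proof: Lemma \ref{10} (Split Property plus finite index) is used to place $\mathfrak{N}_2$ in type I, Gleason's theorem converts each state into a family of distributions $\{e_C\}$ over the commutative subalgebras, and the Generalized No-Signalling theorem of Abramsky--Brandenburger yields compatibility of the marginals. The one caveat you flag --- that Lemma \ref{10} delivers only ``type I'' while Gleason requires type $I_n$ with $n\neq 2,\infty$ --- is precisely the point the paper itself asserts without further argument, so your more cautious treatment of it is, if anything, an improvement rather than a divergence.
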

\begin{proof}
By Split Property, for every $V\in\mathfrak{M},W\in\mathfrak{N}$, $V^*V$ and $W^*W$ are commutative. This commuting observables (in other word, commuting self adjoint operators) induce the family of distribution $\{\rho_C\}$ by way of Gleason theorem {\rm\cite[Thm. 5.3.9]{hamhalter2003quantum}} $\rho_C=\mathrm{Tr}(\rho P)$ where $\rho:\mathfrak{N}_2\rightarrow\mathbf{C}$ is the state, and $P$ is the projection of $\mathfrak{N}_2$. According to Generalized No-Signalling theorem {\rm\cite[Prop. 9.2]{abramsky2011sheaf}}, the families of distributions $\rho_C$ on families commuting observables are no-signalling. Note that, Gleason theorem is true on type $\mathrm{I}_n,(n\neq 2,\infty)$ von Neumann algebra $\mathfrak{N}_2$. This assumption is always satisfied, because of Split Property and the assumptions of Lemma \ref{10}.
\end{proof}

Thanks to this theorem, Split Property induces no-signalling condition in AQFT.
\section{From spacetime to the empirical model}
We will consider about the empirical model in AQFT.

On the discussion of restricting $C^*$-algebras to von Neumann algebra, if $\mathbf{N}$ satisfies the axiom of AQFT, then it can be regarded as the codomain of a local net. In order to consider $\mathbf{N}$ in AQFT, we have to also consider about spacetime. 

Remember the Split Property. Its necessary and sufficient conditions are the following.
\begin{theo}{\rm(\cite[Thm. 3.1]{ojima2015local})}\label{6}
Let $\pi_{\varphi_0}$ be a vacuum representation. The following conditions are equivalent.

{\rm(1)} The pair $(\mathcal{A}(\mathcal{O}_1)),\mathcal{A}(\mathcal{O}_2))$ in the local net $\{\pi_{\varphi_0}(\mathcal{A}(\mathcal{O})'')\}_{\mathcal{O}\in\mathbf{K}}$ satisfies Split Property.

{\rm(2)} A local net $\pi_{\varphi_0}(\mathcal{A}(\mathcal{O})'')$ satisfies Split Property if and only if the following condition is satisfied. For every normal state $\varphi\in\pi_{\varphi_0}(\mathcal{A}(\mathcal{O})'')$, there exists a unitary complete positive map $T$ which satisfies $T(X)=\sum_jC_j^*XC_j$ and $T(X)=\varphi(X)I,(X\in\pi_{\varphi_0}(\mathcal{A}(\mathcal{O}_1))'')$ where $C_j\in\pi_{\varphi_0}(\mathcal{A}(\mathcal{O}_2)''$.
\end{theo}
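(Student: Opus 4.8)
The plan is to establish the equivalence by proving the two implications separately, using the intermediate type $\mathrm{I}$ factor supplied by the Split Property as the bridge between the lattice-theoretic formulation in (1) and the operational (completely positive map) formulation in (2). Throughout I would write $\mathfrak{N}_i:=\pi_{\varphi_0}(\mathcal{A}(\mathcal{O}_i))''$ acting on $\mathcal{H}_{\varphi_0}$; by isotony and microcausality we may assume $\mathfrak{N}_1\subseteq\mathfrak{N}_2'$.

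For (1)$\Rightarrow$(2): assuming Split Property, fix a type $\mathrm{I}$ factor $\mathfrak{M}$ with $\mathfrak{N}_1\subseteq\mathfrak{M}\subseteq\mathfrak{N}_2'$, so that also $\mathfrak{N}_2\subseteq\mathfrak{M}'$. Being type $\mathrm{I}$, $\mathfrak{M}$ admits a unitary identification $\mathcal{H}_{\varphi_0}\cong\mathcal{H}_a\otimes\mathcal{H}_b$ under which $\mathfrak{M}=B(\mathcal{H}_a)\otimes\mathbf{C}1$ and $\mathfrak{M}'=\mathbf{C}1\otimes B(\mathcal{H}_b)$. Given a normal state $\varphi$ on $\mathfrak{N}_1$, I would extend it to a normal state on $\mathfrak{M}$ (restrict to $\mathfrak{M}$ any normal state of $B(\mathcal{H}_{\varphi_0})$ whose restriction to $\mathfrak{N}_1$ is $\varphi$) and let $\rho=\sum_n\lambda_n|h_n\rangle\langle h_n|$ be its density operator on $\mathcal{H}_a$. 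Choosing an orthonormal basis $\{f_k\}$ of $\mathcal{H}_a$ and setting $C_{n,k}:=\sqrt{\lambda_n}\,|h_n\rangle\langle f_k|\otimes 1\in\mathfrak{M}$, one checks $\sum_{n,k}C_{n,k}^{*}C_{n,k}=1$, so that
\[
T(X):=\sum_{n,k}C_{n,k}^{*}XC_{n,k}
\]
is a normal unital completely positive map on $B(\mathcal{H}_{\varphi_0})$. A short computation gives $T(A\otimes B)=\mathrm{Tr}(\rho A)(1\otimes B)$; hence $T(X)=\varphi(X)I$ for every $X\in\mathfrak{N}_1\subseteq\mathfrak{M}$, while each Kraus operator commutes with $\mathfrak{N}_2\subseteq\mathfrak{M}'$, so $T(Y)=Y$ for $Y\in\mathfrak{N}_2$. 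Since $C_{n,k}\in\mathfrak{M}\subseteq\mathfrak{N}_2'$, this is precisely the data required by (2).

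For (2)$\Rightarrow$(1): suppose that for each normal state $\varphi$ on $\mathfrak{N}_1$ there is a normal unital completely positive $T(X)=\sum_jC_j^{*}XC_j$ with $C_j\in\mathfrak{N}_2'$ (so $\sum_jC_j^{*}C_j=1$ and $T$ fixes $\mathfrak{N}_2$ pointwise) and $T|_{\mathfrak{N}_1}=\varphi(\cdot)I$. For $X\in\mathfrak{N}_1$ and $Y\in\mathfrak{N}_2$, using that each $C_j$ commutes with $Y$, we get $T(XY)=\sum_jC_j^{*}XC_jY=T(X)Y=\varphi(X)Y$. Composing with a faithful normal state $\psi$ on $\mathfrak{N}_2$, the map $\omega:=\psi\circ T$ is a normal state on $B(\mathcal{H}_{\varphi_0})$ whose restriction to the algebra generated by the commuting pair $(\mathfrak{N}_1,\mathfrak{N}_2)$ is the product state $XY\mapsto\varphi(X)\psi(Y)$. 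By the standard characterization of split inclusions (see \cite{halvorson2006algebraic,ojima2015local}), the existence of a normal product state across a commuting pair is equivalent to the existence of an intermediate type $\mathrm{I}$ factor $\mathfrak{N}_1\subseteq\mathfrak{M}\subseteq\mathfrak{N}_2'$, i.e.\ to Split Property; if one additionally wants to pin down the type of the algebras one invokes the finite-index input of Lemma \ref{10} together with \cite{kosaki1998type}.

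I expect the forward direction to be essentially bookkeeping once the spatial splitting $\mathcal{H}_{\varphi_0}\cong\mathcal{H}_a\otimes\mathcal{H}_b$ is available, since the slice map $T$ and its Kraus decomposition are then explicit. The hard part will be the converse step that reconstructs an intermediate type $\mathrm{I}$ factor from a normal product state: this is the heart of the Doplicher--Longo theory of standard split inclusions and rests on Tomita--Takesaki modular theory and the standard form of von Neumann algebras, which I would cite rather than reprove. The one genuinely delicate point is matching the algebra in which the Kraus operators are required to lie -- checking that the map produced in (1)$\Rightarrow$(2) really has its Kraus operators in $\mathfrak{M}$ (hence in $\mathfrak{N}_2'$), and that conversely this very placement is what makes the product-state construction go through -- so that the two implications close up exactly as the statement demands.
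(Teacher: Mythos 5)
The paper does not actually prove this statement: Theorem \ref{6} is imported verbatim from \cite[Thm.\ 3.1]{ojima2015local} and used as a black box, so there is no internal proof to compare yours against. Judged on its own, your reconstruction follows the standard route of the cited literature and the forward direction is essentially right: the type $\mathrm{I}$ interpolating factor gives the spatial splitting $\mathcal{H}_{\varphi_0}\cong\mathcal{H}_a\otimes\mathcal{H}_b$, and the Kraus family $C_{n,k}=\sqrt{\lambda_n}\,|h_n\rangle\langle f_k|\otimes 1$ does yield a normal unital (the statement's ``unitary'' is surely a typo) completely positive map with $T(X)=\varphi(X)I$ on the inner algebra and $T(Y)=Y$ on the outer one.

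There are two concrete mismatches you should repair. First, the geometry: you set the problem up for a spacelike-separated commuting pair $\mathfrak{N}_1\subseteq\mathfrak{N}_2'$ and land your Kraus operators in $\mathfrak{M}\subseteq\mathfrak{N}_2'$, whereas the statement (and the source, where the regions are nested, $\overline{\mathcal{O}_1}\subset\mathcal{O}_2$) demands $C_j\in\pi_{\varphi_0}(\mathcal{A}(\mathcal{O}_2))''$, with the split inclusion being $\pi_{\varphi_0}(\mathcal{A}(\mathcal{O}_1))''\subset\mathfrak{M}\subset\pi_{\varphi_0}(\mathcal{A}(\mathcal{O}_2))''$ and the ``no disturbance'' clause $T(AB)=T(A)B$ referring to $B\in\mathcal{A}(\mathcal{O}_2')$ (as the paper itself writes when defining local states in \S 6). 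Your construction survives this relabelling essentially unchanged, but as written it does not verify the clause it is supposed to verify, and you flagged exactly this placement issue as ``delicate'' without resolving it. Second, in the converse you deduce a single normal product state $\omega(XY)=\varphi(X)\psi(Y)$ and then invoke ``normal product state $\Leftrightarrow$ split.'' That equivalence needs the product state to restrict to \emph{faithful} normal states on both algebras (d'Antoni--Longo), plus separability of the predual; since hypothesis (2) quantifies over all normal $\varphi$, you are free to choose $\varphi$ faithful alongside your faithful $\psi$, but you must say so, and you must extend $\psi$ to a normal state of $B(\mathcal{H}_{\varphi_0})$ before composing with $T$. With those repairs the argument closes; the deep input (reconstructing the intermediate type $\mathrm{I}$ factor) is correctly identified and legitimately cited rather than reproved.
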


Notice that, $\pi_{\varphi_0}$ is the faithful representation, so $\mathcal{A}(O)$ satisfies Split Property when Theorem \ref{6} is true.

When we consider about local net, it is useful to focus our attention to normal states on $\mathcal{A}(\mathcal{O})$. The normal states are decided by the set of sub-localnets on bounded spacetime which we are interested in. Hence, the idea of gluing sub-local net together is natural. In \cite{haag1996problem}, they treat a spacetime point like a germ in the sheaf structure of local nets. It can contribute to the definition of thermodynamic non-equilibrium states \cite{buchholz2002thermodynamic}.

A unitary complete positive map $T$ on $\mathcal{A}$ is called {\bf local states} \cite{ojima2015local} when $T(AB)=T(A)B$ is satisfied for all $A\in\mathcal{A},B\in\mathcal{A}((\mathcal{O}_2)')$ and $T(X)=\varphi(X)I$ is satisfied for all $X\in\mathcal{A}(\mathcal{O}_1)$ where $\varphi$ is a normal state of $\mathcal{A}(O_1)$.

By Theorem \ref{6}, a local net $\mathcal{O}\mapsto\mathcal{A}(\mathcal{O})$ is divided into the local state $E_\mathbf{N}:\mathcal{O}\mapsto\mathcal{R}^{\mathcal{A}(\mathcal{O})}_\mathbf{N}\subset\mathcal{A}(\mathcal{O})$, so we can regard it as a sheaf, and also there exists a sheaf $E_\mathbf{S}:\mathcal{O}\mapsto\mathcal{R}^{\mathcal{A}(\mathcal{O})}_\mathbf{S}$ which is the counterpart of $E_\mathbf{N}$. In this sense, $\mathcal{R}^{\mathcal{A}(\mathcal{O})}_\mathbf{S}$ is equal to $\mathbf{S}^\mathrm{op}$ because of duality between $\mathbf{N}$ and $\mathbf{S}^\mathrm{op}$. This sheaf $E_\mathbf{S}$ induces other sheaf $\mathcal{D}_R\mathcal{E}(C):\mathbf{P}(\mathfrak{N})^\mathrm{op}\rightarrow\mathbf{Set}$ when we fix $X=\mathfrak{N}$. Notice that, $\mathbf{S}$ and $\mathbf{P}(\mathfrak{N})$ are equivalence of categories, because they have the same objects (von Neumann algebra) and the same morphims (set-inclusion map).

To sum up the above discussion, under Split Property (it induces no-signalling property), local states induce the sheaf $\mathcal{D}_R\mathcal{E}$. In conclusion, we define the sheaf $\mathfrak{E}$ when we fix the context $C$. 
\begin{defi}\label{12}
$\mathfrak{E}:\mathcal{D}_R\mathcal{E}\circ E_\mathbf{S}::\mathcal{O}\mapsto e_C$.
\end{defi}

Due to the discussion in section $5$, we show counterpart of Generalized No-Signalling theorem in AQFT.
\begin{lemma}{\rm(\cite[Prop. 1, Prop. 13]{redei2010quantum})}\label{9}
Suppose two von Neumann algebras $\mathfrak{N}_1$ and $\mathfrak{N}_2$ which act on a Hilbert space $\mathcal{H}$ are commuting and there exists a unitary operator $U:\mathcal{H}\rightarrow\mathcal{H}\otimes\mathcal{H}$ such that $UXYU^*=X\otimes Y,(X\in\mathfrak{N}_1,Y\in\mathfrak{N}_1)$. Then the pair $(\mathfrak{N}_1,\mathfrak{N}_2)$ satisfies Schlieder Property if and only if it satisfies Split Property.
\end{lemma}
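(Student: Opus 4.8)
The plan is to prove the two implications of the equivalence separately, using the single structural fact that a type $\mathrm{I}$ factor $\mathfrak{M}$ acting on a Hilbert space $\mathcal{K}$ induces a factorisation $\mathcal{K}\cong\mathcal{K}_1\otimes\mathcal{K}_2$ under which $\mathfrak{M}=B(\mathcal{K}_1)\otimes\mathbf{C}1$ and $\mathfrak{M}'=\mathbf{C}1\otimes B(\mathcal{K}_2)$. This is exactly the tensor picture that the hypothesised relation $UXYU^*=X\otimes Y$ (with $X\in\mathfrak{N}_1$ and the second variable ranging over $\mathfrak{N}_2$) is meant to exhibit, so both halves of the argument live in the same picture, and no separability or continuity assumption on $\mathcal{H}$ enters beyond what is already used in \cite[Prop. 1, Prop. 13]{redei2010quantum}.

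For Split $\Rightarrow$ Schlieder, assume a type $\mathrm{I}$ factor $\mathfrak{M}$ with $\mathfrak{N}_1\subset\mathfrak{M}\subset\mathfrak{N}_2'$ and apply the factorisation above to $\mathfrak{M}$. If $E\in\mathfrak{N}_1$ and $F\in\mathfrak{N}_2$ are nonzero projections, then $E\in\mathfrak{M}$ forces $E=E_0\otimes 1$ and $F\in\mathfrak{M}'$ forces $F=1\otimes F_0$ for nonzero projections $E_0$ on $\mathcal{K}_1$ and $F_0$ on $\mathcal{K}_2$. Since $\mathfrak{N}_1$ and $\mathfrak{N}_2$ commute, the projection onto the closed subspace $E(\mathcal{H})\cap F(\mathcal{H})$ equals $EF=E_0\otimes F_0\neq 0$, which is precisely Schlieder Property. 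This half uses neither $U$ nor Lemma \ref{10}; it is the general fact that Split is stronger than Schlieder.

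For Schlieder $\Rightarrow$ Split I would feed in the unitary $U:\mathcal{H}\to\mathcal{H}\otimes\mathcal{H}$. Specialising the covariance relation to $Y=1$ gives $UXU^*=X\otimes 1$, hence $U\mathfrak{N}_1U^*=\mathfrak{N}_1\otimes\mathbf{C}1$; specialising to $X=1$ gives $U\mathfrak{N}_2U^*=\mathbf{C}1\otimes\mathfrak{N}_2$. Put $\mathfrak{M}:=U^*\bigl(B(\mathcal{H})\otimes\mathbf{C}1\bigr)U$; conjugation by a unitary preserves the type, so $\mathfrak{M}$ is a type $\mathrm{I}$ factor on $\mathcal{H}$. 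From $U\mathfrak{N}_1U^*=\mathfrak{N}_1\otimes\mathbf{C}1\subset B(\mathcal{H})\otimes\mathbf{C}1$ we get $\mathfrak{N}_1\subset\mathfrak{M}$, and from $\mathfrak{M}'=U^*\bigl(\mathbf{C}1\otimes B(\mathcal{H})\bigr)U\supset U^*\bigl(\mathbf{C}1\otimes\mathfrak{N}_2\bigr)U=\mathfrak{N}_2$ we get $\mathfrak{M}\subset\mathfrak{N}_2'$. Hence $\mathfrak{N}_1\subset\mathfrak{M}\subset\mathfrak{N}_2'$ with $\mathfrak{M}$ a type $\mathrm{I}$ factor, i.e.\ Split Property; together with Lemma \ref{10} and Theorem \ref{8} this also recovers the no-signalling conclusion in the spatial setting.

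The main obstacle is conceptual rather than computational: Schlieder Property is only a statement about meets of projections and is in general strictly weaker than Split, so the genuine weight of the lemma is carried by the hypothesis that the spatial product unitary $U$ exists — once $U$ is granted, the intermediate type $\mathrm{I}$ factor is essentially handed to us, and in fact Split then holds outright, so the converse implication is as much a statement about the hypothesis as about Schlieder. The delicate points to get right are: (i) that $1$ lies in both $\mathfrak{N}_1$ and $\mathfrak{N}_2$ (otherwise one must work with the unit of $\mathfrak{N}_1\vee\mathfrak{N}_2$) so the specialisations $Y=1$ and $X=1$ are legitimate; (ii) that $U$ is compatible with taking commutants, i.e.\ that $U\mathfrak{N}_1U^*$ and $U\mathfrak{N}_2U^*$ really sit in complementary tensor factors, which is where Schlieder is needed to exclude the degenerate behaviour that would obstruct the product map; and (iii) that $U^*(B(\mathcal{H})\otimes\mathbf{C}1)U$ is a von Neumann algebra — immediate, since it is a unitary conjugate of one. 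I expect (ii) to be the place where the reference's two propositions do the actual work.
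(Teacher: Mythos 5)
The paper does not actually prove this lemma --- it is imported wholesale from \cite[Prop.~1, Prop.~13]{redei2010quantum} with no argument given --- so there is no in-paper proof to compare against; judged on its own, your proof is correct and self-contained. Your Split $\Rightarrow$ Schlieder half is the standard argument: the type $\mathrm{I}$ factor $\mathfrak{M}$ splits $\mathcal{H}$ as $\mathcal{K}_1\otimes\mathcal{K}_2$, projections in $\mathfrak{N}_1$ and $\mathfrak{N}_2$ land in complementary tensor factors, and their product $E_0\otimes F_0$ is the nonzero meet. Your converse half correctly exposes the real structure of the statement: once the unitary $U$ with $UXYU^*=X\otimes Y$ is a standing hypothesis (note the paper's typo --- the second variable should range over $\mathfrak{N}_2$, not $\mathfrak{N}_1$), setting $Y=1$ and $X=1$ places $U\mathfrak{N}_1U^*$ and $U\mathfrak{N}_2U^*$ in complementary tensor factors, and $\mathfrak{M}:=U^*(B(\mathcal{H})\otimes\mathbf{C}1)U$ is an interpolating type $\mathrm{I}$ factor outright, so Split (and hence Schlieder) holds unconditionally and the ``if and only if'' is vacuous. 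This is an accurate diagnosis: as stated, the lemma conflates two separate results of the reference (Schlieder $\Leftrightarrow$ $C^*$-independence, and Split $\Leftrightarrow$ existence of the spatial product unitary), and the equivalence carries no content beyond its hypotheses. One small internal inconsistency in your commentary: in your ``delicate point (ii)'' you suggest Schlieder is needed to ensure the two algebras sit in complementary tensor factors, but your own computation shows this follows directly from the specializations $Y=1$ and $X=1$; Schlieder is never used anywhere in your argument, which is fine logically but means that sentence should be dropped.
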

\begin{theo}{\bf(Spacetime No-Signalling theorem)}\label{13}
If $\mathfrak{N}_1$ and $\mathfrak{N}_2$ are factors with finite index and there exists a unitary operator $U:\mathcal{H}\rightarrow\mathcal{H}\otimes\mathcal{H}$ such that $UXYU^*=X\otimes Y,(X\in\mathfrak{N}_1,Y\in\mathfrak{N}_1)$, then the elements of $\mathfrak{E}$ are no-signalling empirical model.
\end{theo}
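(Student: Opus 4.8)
The plan is to chain together the structural results already established in the paper so that the hypotheses of Theorem~\ref{13} collapse onto those of Theorem~\ref{8} by way of Lemma~\ref{9}, and then to transport the resulting no-signalling conclusion through the sheaf-composite of Definition~\ref{12} using the categorical equivalence of Theorem~\ref{3}. Concretely, I would first observe that the unitary $U:\mathcal{H}\rightarrow\mathcal{H}\otimes\mathcal{H}$ with $UXYU^*=X\otimes Y$ is exactly the hypothesis of Lemma~\ref{9}, so for the commuting pair $(\mathfrak{N}_1,\mathfrak{N}_2)$ the Schlieder Property and the Split Property are equivalent. Then, since $\mathfrak{N}_1$ and $\mathfrak{N}_2$ are factors of finite index, Lemma~\ref{10} applies the moment Split Property holds; but to invoke Lemma~\ref{9} in the direction we need, I must first exhibit that Schlieder Property holds in this setting. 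Here I would appeal to Theorem~\ref{7} (via Theorem~\ref{5}): the strictly spacelike separation of the underlying double cones, which is the geometric content of the hypotheses once $\mathfrak{N}_i=\pi_{\varphi_0}(\mathcal{A}(\mathcal{O}_i))''$, together with microcausality and Property~B (assumed throughout the paper), gives Schlieder Property, hence Split Property by Lemma~\ref{9}.

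With Split Property in hand for the finite-index factor pair $(\mathfrak{N}_1,\mathfrak{N}_2)$, Theorem~\ref{8} immediately yields that the states of $\mathfrak{N}_2$ are no-signalling, i.e. the empirical models $e_C$ obtained from normal states via Gleason's theorem satisfy $e_C|_{C\cap C'}=e_{C'}|_{C\cap C'}$ for all maximal contexts $C,C'$ in $\mathbf{C}(\mathfrak{N}_2)$. The second half of the argument is to push this through the construction $\mathfrak{E}=\mathcal{D}_R\mathcal{E}\circ E_\mathbf{S}$. I would argue that $E_\mathbf{S}:\mathcal{O}\mapsto\mathcal{R}^{\mathcal{A}(\mathcal{O})}_\mathbf{S}$, which by Theorem~\ref{6} realizes the splitting of the local net, lands in $\mathbf{S}^{\mathrm{op}}\simeq\mathbf{P}(\mathfrak{N})$, and that under the equivalence of Theorem~\ref{3} the composite with the distribution-on-events sheaf $\mathcal{D}_R\mathcal{E}:\mathbf{P}(\mathfrak{N})^{\mathrm{op}}\rightarrow\mathbf{Set}$ sends each double cone to the empirical model $e_C$ attached to the corresponding von Neumann algebra and context. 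Since that equivalence identifies observables with measurements and states with no-signalling empirical models (as in the Example following Theorem~\ref{3}), the no-signalling property of the states of $\mathfrak{N}_2$ is carried verbatim to the sections of $\mathfrak{E}$. Thus every element $e_C$ in the image of $\mathfrak{E}$ is a no-signalling empirical model.

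I expect the main obstacle to be the bookkeeping that bridges the ``bare'' von Neumann algebra pair $(\mathfrak{N}_1,\mathfrak{N}_2)$ appearing in the statement and the spacetime-indexed local net $\mathcal{O}\mapsto\mathcal{A}(\mathcal{O})$ that the sheaf $E_\mathbf{S}$ actually lives over: one must make precise that $\mathfrak{N}_i$ is to be read as $\pi_{\varphi_0}(\mathcal{A}(\mathcal{O}_i))''$ for strictly spacelike separated double cones $\mathcal{O}_1,\mathcal{O}_2$, so that Theorem~\ref{7} and Theorem~\ref{6} are genuinely available, and that the finite-index hypothesis together with Lemma~\ref{10} is consistent with Split Property rather than an extra constraint that could be vacuous. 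A secondary delicate point is checking that the equivalence of Theorem~\ref{3} is compatible with the marginalization maps $d\mapsto d|_U$, so that ``no-signalling'' (a statement about restrictions) really is preserved and not merely the underlying sets of sections; this amounts to verifying that $G\circ F$ being the identity on $\mathbf{S}^{\mathrm{op}}$ intertwines the restriction morphisms of $\mathcal{D}_R\mathcal{E}$, which I would dispatch by functoriality. Once these identifications are in place the remaining steps are formal.
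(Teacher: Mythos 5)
Your proposal follows essentially the same route as the paper's own proof: microcausality and Theorem \ref{7} give the Schlieder Property, Lemma \ref{9} (enabled by the unitary $U$) converts it to the Split Property, Theorem \ref{8} yields no-signalling for the states of $\mathfrak{N}_2$, and Theorem \ref{6} supplies the local states through which the conclusion is transported to the sections of $\mathfrak{E}$. Your added remarks on bridging the bare algebra pair with the spacetime-indexed net and on compatibility with marginalization are reasonable elaborations of steps the paper leaves implicit, but they do not change the argument.
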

\begin{proof}
According to Microcausality in the axioms of AQFT, if two spacetimes $\mathcal{O}_1$ and $\mathcal{O}_2$ are spacelike separable, then the localnets of them $\mathcal{A}(\mathcal{O}_1)$ and $\mathcal{A}(\mathcal{O}_2)$ are commutative. It induces Schlieder Property by Theorem \ref{7}. When the pair $(\mathfrak{N}_1,\mathfrak{N}_2)$ satisfies the assumption of this theorem, it is equivalent to Split Property by Lemma \ref{9}. Using Theorem \ref{8}, the states of $\mathfrak{N}_2$ are no-signalling. Note the fact, that Theorem \ref{6} shows the existence of local states, is guaranteed by Split Property. So, we can divide $\mathcal{A}(\mathcal{O}_1)$ and $\mathcal{A}(\mathcal{O}_2)$ to many parts by $\mathfrak{E}$, and the elements 
of each part are no-signalling.
\end{proof}

The existence of unitary operator $U:\mathcal{H}\rightarrow\mathcal{H}\otimes\mathcal{H}$ such that $UXYU^*=X\otimes Y,(X\in\mathfrak{N}_1,Y\in\mathfrak{N}_1)$ is a natural assumption, because it guarantees the existence of a compound system. In \cite[Sect. 1.7]{mansfield2013mathematical}, the compound system of two systems are defined, and it extends to the symmetric monoidal category of them.

\section*{Acknowledgments}
I extremely grateful to Tsukasa Yumibayashi for useful discussions and Jun Ueki for helpful comments. I also thank audiences at Spring Meeting of The Physical Society of Japan, Waseda University, March 2015, and Mathematical Physics Seminar in The Open University of Japan, July 2015.
\bibliography{reference1.bib}
\end{document}